\title{Online Search With Best-Price and \\ Query-Based Predictions}
\author{Spyros Angelopoulos$^1$, Shahin Kamali$^{2}$, and Dehou Zhang$^2$}
\date{{\small $^1$ CNRS and Sorbonne University, Paris, France. \\ $^2$ University of Manitoba, Winnipeg, Canada. }}
\newtheorem{theorem}{Theorem}
\newtheorem{lemma}{Lemma}
\newtheorem{corollary}{Corollary}
\newcommand{\comp}{\textsc{cr}}
\newcommand{\robust}{{\sc Robust-Mix}\xspace}
\newcommand{\ORA}{{\sc Ora}{{\ensuremath{_r}}}\xspace}
\newcommand{\RLIS}{{\sc Rlis}\xspace}
\newcommand{\RBIS}{{\sc Rbis}\xspace}
\newcommand{\ONstar}{{\sc On}{{\ensuremath{^*}}}\xspace}
\newcommand{\crchart}{
	\centering
	\includegraphics[width = .7\columnwidth, clip, trim = 0mm .8cm 0mm 0mm]{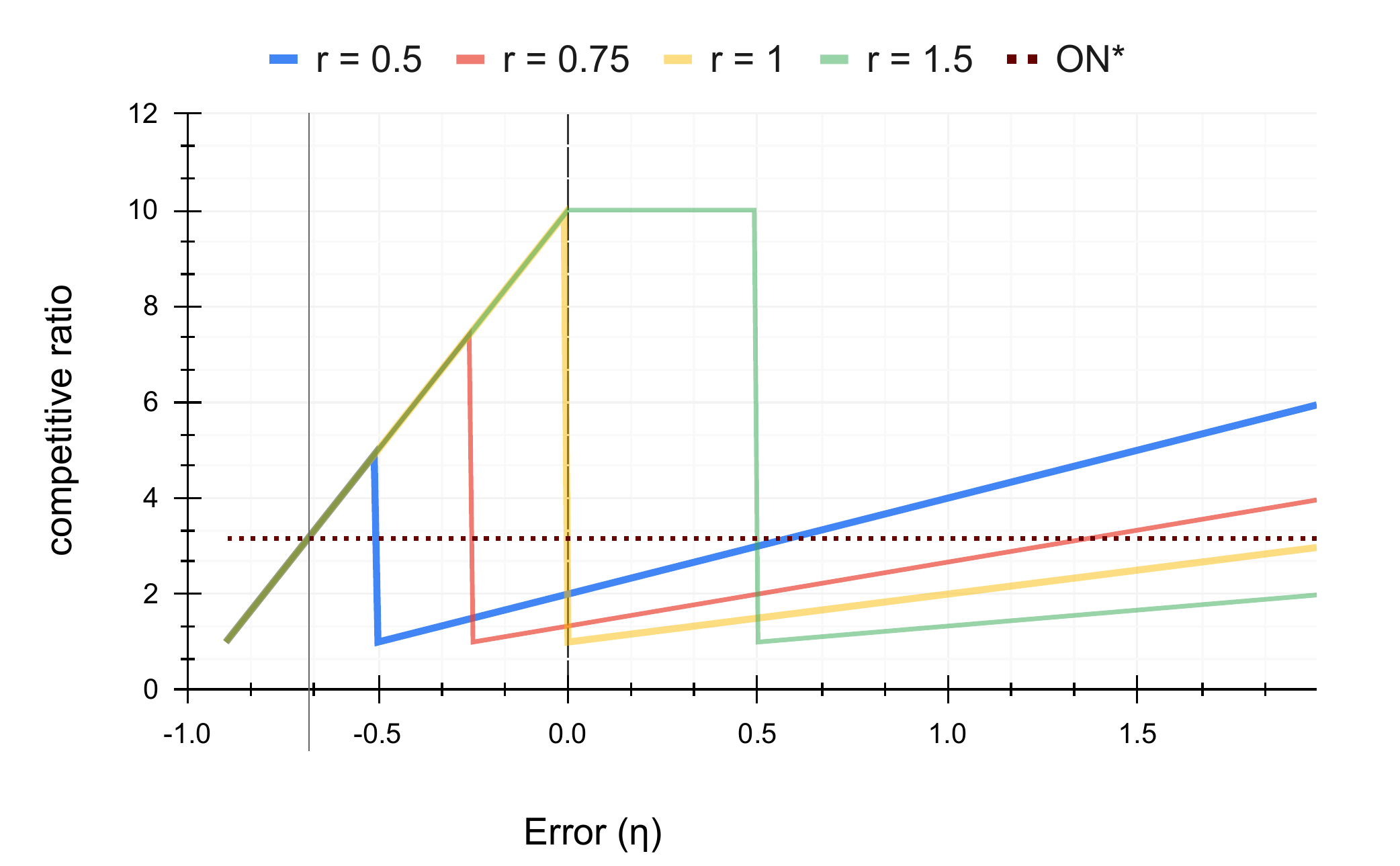}
	
	{\tiny \vspace*{-.47cm}\hspace*{1.4cm}$\sqrt{m/M}$} \\ \scriptsize{ \hspace*{1.99cm} {\fontfamily{bch}\selectfont negative $\leftarrow \eta \rightarrow$ positive} }}
\newcommand{\ourtitle}{Online Search with Best-Price and Query-Based Predictions}
\title{ \ }
\author {
\ 
}
\title{\ourtitle}
\author {
    % Authors
\ 
}
\begin{document}

\maketitle

\begin{abstract}
In the online (time-series) search problem, a player is presented with a sequence of prices which are revealed in an online manner. In the standard definition of the problem, for each revealed price, the player must decide irrevocably whether to accept or reject it, without knowledge of future prices (other than an upper and a lower bound on their extreme values), and the objective is to minimize the competitive ratio, namely the worst case ratio between the maximum price in the sequence and the one selected by the player. The problem formulates several applications of decision-making in the face of uncertainty on the revealed samples.

Previous work on this problem has largely assumed extreme scenarios in which either the player has almost no information about the input, or the player is provided with some powerful, and error-free advice. In this work, we study learning-augmented algorithms, in which there is a potentially erroneous prediction concerning the input. Specifically, we consider two different settings: the setting in which the prediction is related to the maximum price in the sequence, as well as the setting in which the prediction is obtained as a response to a number of binary queries. For both settings, we provide tight, or near-tight upper and lower bounds on the worst-case performance of search algorithms as a function of the prediction error. We also provide experimental results on data obtained from stock exchange markets that confirm the theoretical analysis, and explain how our techniques can be applicable to other learning-augmented applications.  
\end{abstract}

\ \newpage
\section{Introduction}
\label{sec:introduction}

The {\em online (time series) search} problem formulates a fundamental setting in decision-making under uncertainty. In this problem, a player has an indivisible asset that wishes to sell within a certain time horizon, e.g., within the next $d$ days, without knowledge of $d$. On each day $i$, a {\em price} $p_i$ is revealed, and the player has two choices: either accept the price, and accrue a {\em profit} equal to $p_i$, or reject the price, in which case the game repeats on day $i+1$. If the player has not sold by day $d$ (i.e., has rejected all prices  $p_1, \ldots ,p_{d-1}$), then the last price $p_d$ is accepted by default. 

This problem was introduced and studied in~\cite{el2001optimal} by means of {\em competitive analysis}. Namely, the competitive ratio of the player's strategy (or algorithm) is defined as the worst case ratio, over all price sequences, of the maximum price in the sequence divided by the price accepted by the player. Thus, the competitive ratio provides a worst-case guarantee that applies even to price sequences that are adversarially generated. Since the problem formulates a basic, yet fundamental transaction setting, a player that follows a competitively efficient algorithm has a safeguard against any amount of volatility with respect to prices.

El-Yaniv et al. \cite{el2001optimal} gave a simple, deterministic algorithm that achieves a competitive ratio equal to $\sqrt{M/m}$, where $M,m$ are upper and lower bounds on the maximum and minimum price in the sequence, respectively, and which are assumed to be known to the algorithm. This bound is tight for all deterministic algorithms. Randomization can improve the competitive ratio to an asymptotically tight bound equal to $O(\log (M/m))$. See also the surveys~\cite{el1998competitive, mohr2014online}. 

Online search is a basic paradigm in the class of online financial optimization problems. Several variants and settings have been studied through the prism of competitive analysis; see, e.g.,~\cite{damaschke2009online, lorenz2009optimal, xu2011optimal, clemente2016advice}. The problem has also been studied as a case study for evaluating several performance measures of online algorithms, including measures alternative to competitive analysis~\cite{boyar2014comparison,ahmad2021analysis}. Extensions of online search such as {\em one-way trading} and {\em portfolio selection} have also been studied extensively both within competitive analysis;
e.g.,~\cite{el2001optimal, fujiwara2011average, borodin2000competitive}, as well as from the point of view of regret minimization; e.g.,~\cite{hazan2015online, uziel2020long, das2014online}. We refer also to the survey~\cite{li2014online}.

% \subsection{Learning-augmented online algorithms}
% \label{subsec:augmented}

Previous work on competitive analysis of online financial optimization problems, including online search, has largely assumed a status of almost complete uncertainty in regards to the input. Namely, the algorithm has either no knowledge, or very limited knowledge concerning the input.
 This models a scenario that is overly pessimistic: indeed, in everyday financial transactions, the players have some limited, albeit potentially erroneous information on the market. 

This observation illustrates the need for a study of online financial optimization problems using the framework of {\em learning-enhanced} competitive algorithms~\cite{DBLP:conf/icml/LykourisV18,NIPS2018_8174}. Such algorithms have access to some machine-learned information on the input which
is associated with a {\em prediction error} $\eta$. The objective is to design algorithms whose competitive ratio degrades gently as the prediction error increases, but also quantify the precise tradeoff between the performance and the prediction error. Several online optimization problems have been studied in this setting, including caching~\cite{DBLP:conf/icml/LykourisV18,rohatgi2020near}, ski rental and non-clairvoyant scheduling~\cite{NIPS2018_8174,WeiZ20}, makespan scheduling~\cite{lattanzi2020online}, 
rent-or-buy problems~\cite{DBLP:conf/nips/Banerjee20,anand2020customizing,gollapudi2019online}, 
secretary and matching problems~\cite{DBLP:conf/nips/AntoniadisGKK20,abs-2011-11743}, and metrical task systems~\cite{DBLP:conf/icml/AntoniadisCE0S20}. See also the survey~\cite{mitzenmacher2020algorithms}. 
A related line of research is the untrusted advice framework proposed by~\cite{DBLP:conf/innovations/0001DJKR20} in which the algorithm’s performance is evaluated at the extreme cases in which the advice is either error-free, or adversarially generated. 

To our knowledge, there is no previous work on competitive analysis of 
online financial optimization problems in the learning-enhanced model. Note that this is in contrast to analysis based on regret minimization, which inherently incorporates predictions as ``experts''~\cite{cover1996universal,hazan2007online}.  

% \shahin{We can cite some existing work on making predictions about concurrency prices, e.g., using sentiment analysis (I know it is a hot topic)). }

\subsection{Contribution}
\label{subsec:contribution}

We present the first results on competitive online search in a setting that provides predictions related to the price sequence. We show that the obtained competitive ratios are optimal under several models. We also introduce new techniques for leveraging predictions that we argue can be applicable to other learning-augmented online problems. More precisely, we study the following two settings:

\paragraph{The prediction is the best price} Here, the prediction is the best price that the player is expected to encounter. We further distinguish between the model in which no other information on this prediction is available to the player, which we call the {\em oblivious} model, and the model in which an {\em upper bound} to the prediction error is known, which we call the {\em non-oblivious} model. In the latter, the player knows that the error is bounded by some given value $H$, i.e., $\eta \leq H$. The oblivious model is more suitable for markets with very high volatility (e.g., cryptocurrencies), whereas the non-oblivious model captures less volatile markets (e.g., fiat currencies), in which we do not expect the prices to fluctuate beyond a (reasonable) margin. For both models, we give optimal (tight) upper and lower bounds on the competitive ratio as function of the prediction error. A novelty in the analysis, in comparison to previous work, is that we perform an asymmetric analysis in regards to the error, namely we distinguish between {\em positive} and {\em negative} error, depending on whether the best price exceeds the prediction or not. This distinction is essential in order to prove the optimality of our results. 

\paragraph{The prediction is given as response to binary queries} In this model, the prediction is given as a response to $n$ binary queries, for some fixed $n$. For example, each query can be of the form  ``will a price at least equal to $p$ appear in the sequence?''. This model captures settings in which the predictions define {\em ranges} of prices, as opposed to the single-value prediction model, and was introduced recently in the context of a well-known resource allocation problem in AI, namely the {\em contract scheduling} problem~\cite{DBLP:conf/aaai/0001K21}. The prediction {\em error} is defined as the number of erroneous responses to the queries, and we assume non-oblivious algorithms which know an upper bound $H<n$ on the error. Online search was previously studied under an error-free query model in~\cite{clemente2016advice}, however their proposed solution is non-robust: a single query error can force the algorithm to accept a price as bad as the smallest price in the sequence.

We present two different algorithms in this model, and prove strict upper bounds on their competitive ratios, as functions of $n$ and $H$. The first algorithm uses the $n$ queries so as to choose a price from $n$ suitably defined intervals, then accepts the first price in the sequence that is at least as high as the chosen price; moreover, its performance is guaranteed as long as at most half of the query responses are correct. We then present an algorithm based on  {\em robust binary search}, which allows to select a suitable price from a much larger space of $2^n$ intervals, thus leading to improved performance, at the expense of a relatively smaller (but still high) tolerance to errors (i.e., the theoretical analysis assumes that $H<n/4$). This result is the main technical contribution of this work, and we expect that it can find applications in many other settings in which we must identify a ``winner'' from a large set of candidates, in the presence of errors. We give such a concrete application in Section~\ref{sec:conclusion}. We complement the robust binary-search upper bound with a theoretical lower bound on the competitive ratio in this query model.

For both models, we evaluate experimentally our algorithms on real-world data, in which the prices are the exchange rates for cryptocurrencies or fiat currencies. Our experimental results demonstrate that the algorithms can benefit significantly from the predictions, in both models, and that their performance decreases gently as the error increases. 
% The experiments also show that even in pessimistic cases of extreme error (e.g., error that grows with $n$), our solutions outperform the best algorithms without prediction.

\subsection{Notation and definitions}
\label{subsec:notation}

Let $\sigma=(\sigma_i)_{i=1}^d$ be a sequence of prices revealed on days $1,d$. 
Given an algorithm $A$, $A(\sigma)$ is the {\em profit} of $A$ on $\sigma$; 
since $\sigma$ and  $A$ are often implied from context, we simply refer to the profit of the algorithm as its accepted price.  We denote by $p^*$ the optimal price in the input. Given an algorithm $A$, we denote its competitive ratio by \comp(A). Recall that $m,M$ are the known lower and upper bounds on the prices in the sequence, respectively. We denote by \ONstar the optimal online algorithm without predictions, i.e., the algorithm of competitive ratio $\sqrt{M/m}$.

A {\em reservation} algorithm with price $q$ is an algorithm that accepts the first price in the sequence that is at least $q$. For example, it is known that \ONstar can be described as a reservation algorithm with price $\sqrt{M/m}$.

\section{Algorithms with Best-Price Prediction}

% \subsection{Description of the setting}
% \label{subsec:price.setting}

In this setting, the prediction is the highest price that will appear in the sequence. In the remainder of the section, we denote this prediction with $p \in [m,M]$, and recall that $p^*$ is the optimal price. The prediction $p$ is associated with an error $\eta$, defined as follows. 
If $p^* \leq p$, we define $\eta$ to be such that $1-\eta = p^*/p$, and we call this error {\em negative}, in the sense that the best price is no larger than the predicted price.  Note also that the negative error ranges in $[0,(M-m)/M]$, that is, $\eta<1$, in this case. 
If $p^*>p$, we define $\eta$ to be such that
$1+\eta=p^*/p$, and we call the error {\em positive}, in the sense that the best price is larger than the predicted price. Since $1 < p^*/p \leq M/m$, the positive error ranges in $(0,(M-m)/m]$. 
Naturally, the online algorithm does not know neither the error value, nor its parity. The parity is a concept that we introduce for the benefit of the analysis. 

Depending on the volatility of the market, the positive and negative error can fluctuate within a certain range. Let $H_n$, $H_p$ denote upper bounds on the negative and positive errors, respectively, i.e., $H_n \leq (M-m)/M$, and $H_p \leq (M-m)/m$. We distinguish between non-oblivious and oblivious algorithms, namely between algorithms that know $H_n$ and $H_p$, and algorithms that do not, respectively.

\subsection{Oblivious algorithms}
\label{subsec:price.oblivious}

We first study oblivious algorithms, and show matching upper and lower bounds on the competitive ratio. Given algorithm $A$ with prediction $p$, define the function $s_A(p,m,M) \in [m,M]$ as the smallest price revealed on day 1 (i.e., the smallest value of $p_1$) such that $A$ accepts that price on day 1. 
Define also $r_A=s_A(p,m,M)/p$. We first show a lower bound on the competitive ratio. 

\begin{theorem}\label{th:lowerBound}
	For any algorithm $A$ with prediction $p$, 
	\[ \comp(A) \geq  
	\begin{cases}
		(1-\eta)/r_A, & \text{if } \eta \leq 1-r_A \\
		{(1-\eta)M}/{m}, & \text{if } \eta > 1-r_A,
	\end{cases}
	\]  
	if the error is negative, and  
	\[
	\comp(A) \geq  
	\begin{cases}
		(1+\eta)/r_A, & \text{if } \eta \geq r_A-1 \\
		{M}/{m}, & \text{if } \eta < r_A-1,\\
	\end{cases} 
	\]  
	if the error is positive.
\end{theorem}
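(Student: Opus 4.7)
Plan: I will establish each of the four inequalities by exhibiting, for the corresponding regime, an adversarial input sequence that forces $A$'s profit to a controllable value, while fixing the sequence's maximum price to match the prescribed error $\eta$. The central object throughout is the threshold $s_A = r_A \cdot p$, which by definition is the smallest day-one price that $A$ accepts.

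\textbf{Negative error.} When $\eta \leq 1 - r_A$, the value $p^* = (1-\eta)p$ is at least $s_A$, so the adversary presents $s_A$ on day 1 (which $A$ is forced to accept by the very definition of $s_A$), locking in profit $r_A p$, and then $(1-\eta)p$ on day 2 followed by copies of $m$. The resulting ratio is $(1-\eta)p/(r_A p) = (1-\eta)/r_A$. When $\eta > 1 - r_A$, we have $(1-\eta)p < s_A$, so the adversary presents $p^* = (1-\eta)p$ on day 1 (rejected since below threshold) and $m$ on every subsequent day up to an adversary-chosen horizon $d$. The algorithm's eventual profit is $m$, and the ratio is $(1-\eta)p/m$; pushing this over the adversary's worst-case choice of prediction by taking $p = M$ yields the claimed $(1-\eta)M/m$.

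\textbf{Positive error.} The two subcases mirror the negative ones. For $\eta \geq r_A - 1$ (equivalently, $p^* = (1+\eta)p \geq s_A$), present $s_A$ on day 1 (accepted, profit $r_A p$), then $(1+\eta)p$ on day 2; the ratio is $(1+\eta)/r_A$. For $\eta < r_A - 1$ (so $(1+\eta)p < s_A$), present $(1+\eta)p$ on day 1 (rejected), then $m$'s; the algorithm's profit is $m$, giving ratio $(1+\eta)p/m$, which approaches $M/m$ as $p$ and $\eta$ are pushed to their extremes within the subcase (so that $(1+\eta)p$ approaches $s_A \leq M$).

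\textbf{Main subtlety.} The one delicate point is showing that, after $A$ rejects on day 1 in either rejection subcase, the adversary can force $A$'s profit to be exactly $m$ regardless of any subsequent adaptive behavior. This holds because (i) the adversary controls the horizon $d$, so $A$ cannot indefinitely defer acceptance and is ultimately forced to take the last price; and (ii) every post-day-one price equals $m$, so any acceptance -- voluntary or forced -- yields exactly $m$. With this in place, the four ratio computations are immediate, and the case splits at $p^* = s_A$ exhaust every admissible value of $\eta$ in each error polarity.
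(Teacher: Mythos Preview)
Your approach mirrors the paper's: split by error polarity and by whether $p^*$ lands above or below the day-one acceptance threshold $s_A = r_A p$, and exhibit an adversarial sequence in each regime. The first three cases are correct and essentially identical to the paper's argument (minor cosmetic differences in how you pad the sequence after the decisive day are immaterial, and your handling of the negative-error rejection case, taking $p=M$, is exactly what the paper does).

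The fourth case (positive error, $\eta < r_A - 1$) has a genuine gap. You correctly obtain the ratio $(1+\eta)p/m$, but your justification that this ``approaches $M/m$ as $p$ and $\eta$ are pushed to their extremes within the subcase (so that $(1+\eta)p$ approaches $s_A \leq M$)'' does not work. First, $\eta$ is a fixed parameter in the theorem statement and cannot be varied by the adversary. Second, even granting free choice of $p$, driving $(1+\eta)p$ toward $s_A$ yields a ratio approaching $s_A/m$, and your own inequality $s_A \leq M$ then gives only $s_A/m \leq M/m$---the wrong direction for a lower bound. The paper instead sets $p = M/(1+\eta)$ so that $p^* = (1+\eta)p = M$ exactly; the case hypothesis $r_A > 1+\eta$ then forces $r_A p > M$, meaning $A$ refuses even the price $M$ on day 1, and the sequence $M, m, \ldots, m$ gives profit $m$ and ratio exactly $M/m$. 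In other words, this subcase is precisely the situation where the algorithm declines the maximum possible price, and the adversary exploits that directly rather than through a limiting argument.
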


\begin{proof}
	\noindent
	{\em Case 1: $\eta$ is negative}, i.e., $p^* = p(1-\eta)$. The adversary chooses $p^*=(1-\eta)M$, which implies that $p=M$. 
	
	Suppose first that $r_A \leq 1-\eta$, hence $r_A\cdot p \leq p^*$. The adversary presents the sequence
	$r_A \cdot p, p^*, \ldots ,p^*$ . From the definition of $r_A$,  $A$ accepts the price on day 1, and $\comp(A) \geq p^*/(r_A\cdot p) 
	= (1-\eta)/r_A$.
	
	Next, suppose that $r_A > 1-\eta$. We have $r_A\cdot p  > (1-\eta) \cdot p =  p^*$. 
	The adversary presents the sequence  $p^*, m, \ldots ,m$. By definition, $A$ rejects the price on day 1, hence its profit is $m$, and $\comp(A) \geq p(1-\eta)/m = (1-\eta) M/m.$
	
\smallskip

	\noindent 
	{\em Case 2: $\eta$ is positive}, i.e., $p^* = p(1+\eta)$. 
	The adversary chooses $p^* = M$, which implies that $p=M/(1+\eta)$.
	
	Suppose first that $r_A \leq 1+\eta$, that is, $r_A \cdot p \leq p^*$. The adversary chooses the sequence of prices $r_A \cdot p, p^*, p^*, \ldots ,p^*$. By definition, $A$ accepts on day 1, therefore
	$\comp(A) \geq p^*/(r_A\cdot p) = (1+\eta)/r_A$.
	
	Next, suppose that $r_A > 1+\eta$, which implies $r_A\cdot p  > (1+\eta) \cdot p =  p^* =M$. The intuition here is that $A$ does not accept on day 1 a price equal to $M$, which is clearly a bad decision. The adversary chooses the sequence of prices $M, m, \ldots ,m$, and thus $\comp(A) \geq M/m$.
	%\spyros{You had more stuff here, is it needed? I do not think we got to push $A$ until the last day.}
	%Shahin: No, it is good
\end{proof}

Next, we show a class of algorithms whose competitive ratio matches Theorem~\ref{th:lowerBound}. 
For any $r>0$, define the {\em oblivious reservation algorithm}, named \ORA as the algorithm with reservation price $r\cdot p$, given the prediction $p$. 
 % We say that an algorithm is based on a {\em reservation price $\tau$}, if the algorithm accepts the first price that is at least $\tau$ in the sequence. 

\begin{theorem}\label{lem:up}%[Appendix]
	The  algorithm \ORA (with reservation price $r \cdot p$) has competitive ratio 
\[ \comp({\textsc{Ora}}_r) \leq  
	\begin{cases}
		(1-\eta)/r, & \text{if } \eta \leq 1-r \\
		{(1-\eta)M}/{m}, & \text{if } \eta > 1-r,
	\end{cases}
	\]  
	if the error is negative, and  
	\[
\comp({\textsc{Ora}}_r) \leq  
	\begin{cases}
		(1+\eta)/r, & \text{if } \eta \geq r-1 \\
		{M}/{m}, & \text{if } \eta < r-1,\\
	\end{cases} 
	\]  
	if the error is positive.
\end{theorem}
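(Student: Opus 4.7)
The plan is to analyze \ORA by splitting into the two error cases (negative and positive), and within each, distinguishing whether the reservation price $r \cdot p$ lies at or below $p^*$, or strictly above it. The central observation, which unifies all four subcases, is the following: if $r \cdot p \leq p^*$, then the sequence necessarily contains at least one price (namely the optimal one) meeting the reservation threshold, so the algorithm's accepted price is at least $r \cdot p$, and hence $\comp(\ORA) \leq p^*/(r \cdot p)$; whereas if $r \cdot p > p^*$, no price in the sequence reaches the threshold, so in the worst case the algorithm accepts the default last price, whose value is at least $m$, giving $\comp(\ORA) \leq p^*/m$.

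First I would handle the negative error case, where $p^* = (1-\eta)p$. The assumption $\eta \leq 1-r$ is equivalent to $r \cdot p \leq (1-\eta)p = p^*$, so by the observation the profit is at least $r \cdot p$, yielding $\comp(\ORA) \leq p^*/(r \cdot p) = (1-\eta)/r$. Otherwise $\eta > 1 - r$, meaning $r \cdot p > p^*$; the worst-case profit is then $m$, and using $p \leq M$ we obtain $\comp(\ORA) \leq p^*/m = (1-\eta)p/m \leq (1-\eta)M/m$. The positive error case $p^* = (1+\eta)p$ is handled symmetrically: when $\eta \geq r-1$, i.e., $r \cdot p \leq p^*$, the profit is at least $r \cdot p$ and the ratio is at most $(1+\eta)/r$; when $\eta < r-1$, the profit can be as low as $m$, yielding $p^*/m \leq M/m$ directly, since $p^* \leq M$ in every instance.

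There is no serious obstacle in the argument --- the proof is essentially the mechanical complement to the lower bound of Theorem~\ref{th:lowerBound}, with the reservation-price-versus-$p^*$ dichotomy doing all the work. The only point to be careful about is that in the ``bad'' subcases one must bound $p^*$ by $M$ and the algorithm's profit by $m$ \emph{separately}, rather than plugging in the precise values chosen by the adversary in the matching lower bound, so that the upper bound holds uniformly over all admissible instances rather than just against a particular construction.
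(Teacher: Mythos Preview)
Your proposal is correct and follows essentially the same approach as the paper: split into negative/positive error, and within each compare the reservation price $r\cdot p$ to $p^*$, bounding the profit by $r\cdot p$ when the threshold is met and by $m$ otherwise, with $p\leq M$ (resp.\ $p^*\leq M$) supplying the remaining factor. The only cosmetic difference is that you abstract the dichotomy as a single ``central observation'' before instantiating it, whereas the paper repeats the argument in each subcase.
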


\newcommand{\compORA}{\textsc{cr(Ora$_r$)}}
\begin{proof} \ \\ %[Proof of Theorem~\ref{lem:up}]\  \\
	\noindent \textbf{negative error:} Let $\eta \leq 1$ be a negative error. Thus, we have $p^* = p(1-\eta)$. 
	
	\noindent First, suppose $r \leq (1-\eta)$, which means $r\cdot p \leq p^*$. Therefore, \ORA has a profit of at least $r \cdot p$, and thus $\compORA \leq \frac{(1-\eta)p}{r \cdot p} = \frac{1-\eta}{r} $.
	
	\noindent Next, suppose $r > (1-\eta)$. We have $p^* = p (1-\eta) \leq (1-\eta) \cdot M $. On the other hand, the profit of \ORA is at least $m$. Therefore, we have $\compORA \leq (1-\eta)M/m$. 
	
	\noindent \textbf{positive error:} Let $\eta$ be a positive error. Thus, we have $p^* = p(1+\eta)$.
	%The adversary chooses $v$ to be equal to $M/(1+\eta)$ in this case
	
	\noindent First, suppose $r \leq (1+\eta)$, that is, $r\cdot p \leq p^*$, and \ORA has a profit of at least $r \cdot p$. On the other hand, we have $p^* = p(1+\eta)$, and thus $\compORA \leq \frac{p(1+\eta)}{r \cdot p} = \frac{1+\eta}{r} $.
	
	\noindent Next, suppose $r > (1+\eta)$. Then, we have $p^* \leq M$ and the profit of \ORA is at least $m$, and thus $\compORA \leq M/m$. 
\end{proof}

%\spyros{Question: before the statement, you said that the reservation is $r_A$, I think this is a typo? Also, I suggest that we do not repeat the statement of the comp ratio, we say that the algorithm is tight, and push the proof to the Appendix, otherwise I am worried that the reader may get bored with the repetition}
%Shahin: right, and completely agreed

%Theorems~\ref{th:lowerBound} and~\ref{lem:up} imply that no algorithm with a given reservation price dominates other algorithms. More precisely, for any two algorithms $A_1$ and $A_2$ with different reservation prices, there are some values of $\eta$ for which $A_1$ has a better competitive ratio while for other values of $\eta$, $A_2$ has a better competitive ratio. To illustrate this, consider 

Figure~\ref{fig:crChart} illustrates the competitive ratio of \ORA, as function of $\eta$, for different values of the parameter $r$. %See also the discussion in the Appendix. 
First, we observe that there is no value of $r^*$ such that {\textsc{Ora}$_{r^*}$} dominates \ORA with $r \neq r^*$. More precisely, for any pair of $r_1$ and $r_2$, there are some values of $\eta$ for which {\textsc{Ora}$_{r_1}$}  has a better competitive ratio while for other values of $\eta$, {\textsc{Ora}$_{r_2}$} has a better competitive ratio. 

For positive error, the competitive ratio degrades linearly in the error with slope $1/r$. Note however that if $r=1.5$, we have $\compORA=M/m (=10)$ for 
$\eta < r - 1 = 0.5$, since for positive error in $(0,0.5)$, \ORA does not trade at day even if the trading price is $M$. 
%\spyros{I transcribed this last part but something is missing}. Correct
For the other values of $r$, we have $r \leq \eta+1$ for which $\compORA\leq (1+\eta)/r$, and hence the algorithm performs better when $r$ becomes larger. %We refer to the Appendix for some additional observations. 

For negative but {\em small} values of error, we have that $\compORA=(1-\eta)/r$, 
thus the performance improves linearly in the error, again with slope $1/r$. For larger values, i.e., if $\eta>1-r$, there is a ``jump'' in the competitive ratio, which increases from $1$ to $(1-r) M/m$. Following this jump, $\compORA$ improves linearly with the error, this time with slope $M/m$.

\begin{figure}
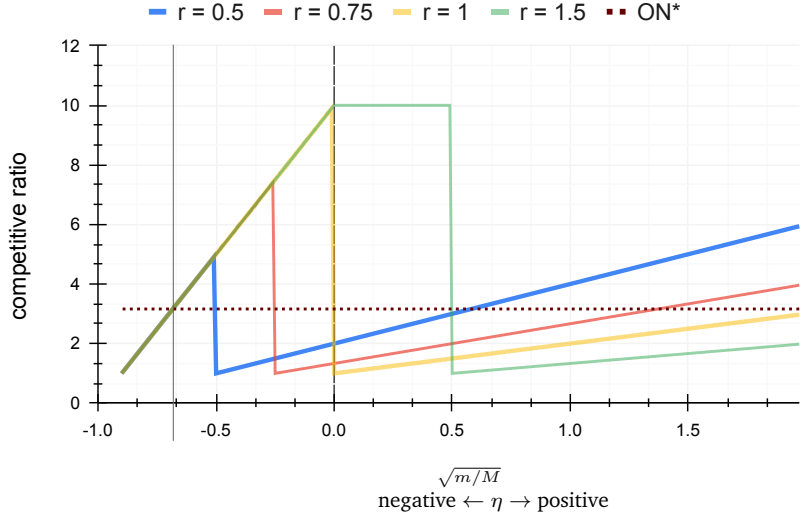

	%    \centering
	\crchart
	\caption{The competitive ratio of \ORA as function of the error $\eta$, and the parameter $r$. Here we choose $M/m=10$, and thus the ranges of negative and positive error are $[0,0.9]$ and $(0,9)$, respectively. } %The dotted line is the competitive ratio of the best online algorithm without predictions.}
	\label{fig:crChart}
\end{figure}

Even though \ORA is optimal according to Theorems~\ref{th:lowerBound} and~\ref{lem:up}, its competitive ratio may be worse than \ONstar for certain ranges of error (namely, for small negative or large positive error). However, this is unavoidable: the next corollary shows that there is no oblivious algorithm with best-price prediction that improves upon \ONstar for all values of the error, i.e., cannot dominate \ONstar for all values of error. 

\begin{corollary}\label{coro:coro}%[Appendix]
	For any oblivious algorithm $A$ with best-price prediction, there exists some range of error $\eta$ for which 
	$\comp(A) > \sqrt{M/m}$. 
\end{corollary}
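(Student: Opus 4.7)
The plan is to invoke Theorem~\ref{th:lowerBound} and show that, whatever value the algorithm's parameter $r_A$ takes, there is a value of $\eta$ on which the Theorem's lower bound already exceeds $\sqrt{M/m}$. Writing $r$ for $r_A$, the argument is a case analysis driven by the two natural thresholds $\sqrt{m/M}$ and $\sqrt{M/m}$: Theorem~\ref{th:lowerBound} is decreasing in $r$ on the negative-error branch and increasing in $r$ on the positive-error branch, so no single value of $r$ can keep both branches at or below $\sqrt{M/m}$.

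For $r \leq \sqrt{m/M}$, I would exhibit a small positive error $\eta>0$. Since $r\leq 1$, the hypothesis $\eta \geq r-1$ is automatic, placing us in the first branch of the positive-error case of Theorem~\ref{th:lowerBound}, so $\comp(A) \geq (1+\eta)/r \geq (1+\eta)\sqrt{M/m} > \sqrt{M/m}$. For $r \geq \sqrt{M/m}$ (which forces $r>1$), I would instead take $\eta = 0$ with negative parity. Then $1-r < 0 \leq \eta$, so the second branch of the negative-error case applies and $\comp(A) \geq (1-\eta)M/m = M/m > \sqrt{M/m}$.

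The intermediate regime $\sqrt{m/M} < r < \sqrt{M/m}$ is the only one where both of the above easy choices fail, and it is the main technical step. Here I would pick a positive error $\eta$ slightly above $r\sqrt{M/m}-1$. The two strict inequalities defining this regime are each needed: $\sqrt{m/M} < r$ ensures $r\sqrt{M/m}-1>0$ so that $\eta$ is a legal positive error, and $r < \sqrt{M/m}$ ensures $r\sqrt{M/m}-1 < M/m-1$ so that $\eta$ stays inside the admissible range $(0,(M-m)/m]$. Since $\sqrt{M/m}>1$, such $\eta$ also satisfies $\eta > r-1$, so the first branch of the positive-error case applies and yields $\comp(A) \geq (1+\eta)/r > r\sqrt{M/m}/r = \sqrt{M/m}$.

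The obstacle that drives this three-case split is the boundary behaviour at the thresholds $r=\sqrt{m/M}$ and $r=\sqrt{M/m}$, where the naive choice $\eta=0$ produces only equality with $\sqrt{M/m}$ rather than the desired strict inequality. Using a strictly positive $\eta$ at the small-$r$ boundary, and the tailored choice $\eta \approx r\sqrt{M/m}-1$ in the intermediate regime, resolves these corner cases and gives a strict inequality uniformly across $r$.
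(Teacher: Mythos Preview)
Your argument is correct and follows essentially the same approach as the paper: both proofs invoke Theorem~\ref{th:lowerBound} and, depending on the value of $r_A$, exhibit an error on either the positive or the negative branch that forces the lower bound above $\sqrt{M/m}$. The only difference is organisational: the paper presents two overlapping scenarios (a negative error in $(1-r_A,\,1-\sqrt{m/M})$ and a positive error with $\eta \geq r_A\sqrt{M/m}-1$) without explicitly stating that together they cover every $r_A$, whereas you make the three-way split on $r_A$ explicit and are more careful about strict inequalities and the admissible range of $\eta$; your Cases~1 and~3 together amount to the paper's positive-error scenario, and your Case~2 is a particular instance of the paper's negative-error scenario.
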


\begin{proof}%[Proof of Corollary~\ref{coro:coro}]
	First, suppose that the error is negative and in the range $(1-r_A, 1-\sqrt{m/M})$. Theorem~\ref{th:lowerBound} shows that  $\comp(A) > (1-\eta)M/m\geq (\sqrt{m/M}) M/m = \sqrt{M/m}$.
	Next, suppose that the error is positive and $\eta \geq r_A \sqrt{M/m} -1$. Therefore, $\eta > r_A-1$ and again by Theorem~\ref{th:lowerBound}, the $\comp(A) \geq (1+\eta)/r_A \geq \sqrt{M/m}$.
\end{proof}

\subsection{Non oblivious algorithms}
\label{subsec:aware}

In this section, we show matching upper and lower bounds on the competitive ratio, in the setting 
in which the algorithm knows upper bounds $H_n$ and $H_p$ on the negative and positive error, respectively. 
%Recall that $H_n \leq (M-m)/M$ and $H_p \leq (M-m)/m$.

We call an algorithm A \emph{robust} if for all values of $M$ and $m$, and all values of $\eta$ (negative or positive), $\comp(A)\leq \sqrt{M/m}$. In light of Corollary~\ref{coro:coro}, without knowing $H_n$ and $H_p$, no online algorithm can be robust. 
In what follows, we will show that there exist robust non-oblivious algorithms. In particular, we define the algorithm \robust which works as follows. If $(1+H_p)/(1-H_n) > \sqrt{M/m}$, then \robust ignores the prediction and applies \ONstar. Otherwise, i.e., if $(1+H_p)/(1-H_n) \leq \sqrt{M/m}$, \robust is an algorithm with reservation price equal to  $p' = p(1-H_n)$. 
%and accepts the first price in the sequence which is at least as large as the reservation price  $p'$. 
Note that since $H_n \leq (M-m)/M$, we have $p' \geq mp/M$.

%  cr(A) \leq  
% \begin{cases}
%     (1+\eta)/r_A, & \text{for } \eta \geq r_A-1 \\
%     {M}/{m}, & \text{for } \eta < r_A-1\\
% \end{cases} 

\begin{theorem} \label{th:haware} %[Appendix] 
	$\comp(\text{\robust}) \leq$ \
	\[
	\begin{cases}
		\min\{(1-\eta)/(1-H_n), \sqrt{M/m}\}, & \text{for negative error}\\
		\min\{(1+\eta)/(1-H_n), \sqrt{M/m}\}, & \text{for positive error}.
	\end{cases}
	\]
\end{theorem}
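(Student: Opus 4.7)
My plan is to split the analysis on which of the two branches of \robust is executed, and in each branch to establish the relevant terms of the $\min$.

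First I would handle the branch $(1+H_p)/(1-H_n) \leq \sqrt{M/m}$, in which \robust is a reservation algorithm with reservation price $p'=p(1-H_n)$ (essentially the instance of \ORA with $r=1-H_n$). The key preliminary is that $p^*\geq p'$ always: for negative error, $\eta\leq H_n$ gives $p^*=p(1-\eta)\geq p(1-H_n)=p'$, while for positive error, $p^*=p(1+\eta)>p>p'$. Since $p^*$ is present in the sequence, the algorithm commits to some price of value at least $p'$, and hence $\comp(\robust)\leq p^*/p'=(1\pm\eta)/(1-H_n)$, with the sign matching the parity of the error. To obtain the $\sqrt{M/m}$ term in this branch I would chain with the branch hypothesis: in the positive case, $(1+\eta)/(1-H_n)\leq (1+H_p)/(1-H_n)\leq \sqrt{M/m}$; in the negative case, $(1-\eta)/(1-H_n)\leq 1/(1-H_n)\leq (1+H_p)/(1-H_n)\leq \sqrt{M/m}$. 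Both arguments for this branch thus follow cleanly from Theorem~\ref{lem:up} together with a chain of inequalities.

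Next I would handle the complementary branch $(1+H_p)/(1-H_n)>\sqrt{M/m}$, where \robust reduces to \ONstar with competitive ratio $\sqrt{M/m}$; this immediately yields one of the two terms of the $\min$. The step I expect to be the main obstacle is accounting for the $(1\pm\eta)/(1-H_n)$ term in this branch, since \ONstar is oblivious to the prediction and therefore cannot produce an $\eta$-dependent $\comp$. The natural resolution is to exploit the branch hypothesis together with the admissible error ranges $\eta\leq H_n$ (negative) and $\eta\leq H_p$ (positive) to verify that $(1\pm\eta)/(1-H_n)\geq \sqrt{M/m}$ throughout this regime, so that the $\min$ collapses to $\sqrt{M/m}$ and is matched by \ONstar. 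Combining the two branches then yields the unified statement.
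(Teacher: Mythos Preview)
Your overall structure is the same as the paper's: split on the two branches of \robust, and in the reservation branch use $\eta\le H_n$ (negative) or $\eta>0$ (positive) to get $p^{*}\ge p'=p(1-H_n)$, hence $\comp\le(1\pm\eta)/(1-H_n)$; in the \ONstar branch quote $\comp\le\sqrt{M/m}$. In fact you are slightly tidier than the paper: you split on the algorithm's actual branching condition $(1+H_p)/(1-H_n)\lessgtr\sqrt{M/m}$, whereas for negative error the paper splits on $1/(1-H_n)\lessgtr\sqrt{M/m}$, which does not by itself determine which branch \robust takes. You also explicitly verify, in the reservation branch, that $(1\pm\eta)/(1-H_n)\le\sqrt{M/m}$ via the chain through $(1+H_p)/(1-H_n)$, so the $\min$ really is the first term there; the paper leaves that implicit.

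The step that would fail is your resolution of the ``main obstacle'' in the \ONstar branch. You plan to show $(1\pm\eta)/(1-H_n)\ge\sqrt{M/m}$ for all admissible $\eta$ using the branch hypothesis $(1+H_p)/(1-H_n)>\sqrt{M/m}$ together with $\eta\le H_n$ or $\eta\le H_p$. That inequality is simply not forced: for negative error at $\eta=H_n$ one gets $(1-H_n)/(1-H_n)=1<\sqrt{M/m}$, and for positive error near $\eta=0$ one gets $1/(1-H_n)$, which can be strictly below $\sqrt{M/m}$ even while $(1+H_p)/(1-H_n)>\sqrt{M/m}$. So the $\min$ does \emph{not} collapse to $\sqrt{M/m}$ in this branch, and since \ONstar is oblivious to $p$ it cannot beat $\sqrt{M/m}$. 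Note that the paper's own proof never attempts this step: in each sub-case it establishes only one of the two entries of the $\min$ and stops. If your aim is to match the paper, you can do the same; if your aim is to prove the $\min$ as literally written, the obstacle you correctly flagged is genuine and your proposed fix does not close it.
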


\begin{proof}%[Proof of Theorem~\ref{th:haware}]
	Consider first the case of negative error. 
	Suppose that $1/(1-H_n) \leq \sqrt{M/m}$.
	We have $p' = p(1-H_n) \leq p(1-\eta) = p^*$. 
	Thus, the reservation price $p'$ of \robust is no larger than $p^*$, and the algorithm indeed accepts a price at least as high as $p'$. 
	%This implies that the algorithm will accept a price at least as high as $p^*$. 
	Therefore, $\comp(\text{\robust}) \leq p^*/p' = \frac{p(1-\eta)}{p(1-H_n)} = (1-\eta)/(1-H_n)$.
	%\\
	%&\leq& 1/(1-H_n) \leq (1+\eta)/(1-H_n). %\leq \sqrt{M/m}.
	%\end{eqnarray*}
	If $1/(1-H_n) > \sqrt{M/m}$, then  $(1+H_p)/(1-H_n) > \sqrt{M/m}$, hence \robust applies \ONstar and  $\comp(\text{\robust})\leq \sqrt{M/m}$. 
	
	Next, we consider the case that the error is positive.
	Suppose that $(1+H_p)/(1-H_n) \leq \sqrt{M/m}$. We have $p' = p(1-H_n) \leq p \leq p(1+\eta)=p^*$. %So, the reservation price $p'$ of \robust is no larger than the best price in the sequence.
	%Therefore, the algorithm will indeed accept a price at least as high, 
	Again, this implies that \robust accepts a price at least as high as $p'$, and thus 
	$\comp(\text{\robust}) \leq p^*/p' = (p(1+\eta))/(p(1-H_n)) = (1+\eta)/(1-H_n)$.
	If $(1+H_p)/(1-H_n) > \sqrt{M/m}$, then \robust applies \ONstar and hence $\comp(\text{\robust}) \leq \sqrt{M/m}$. 
\end{proof}

We also prove a matching bound which establishes that \robust is the optimal non-oblivious algorithm:

\begin{theorem}
	Any non-oblivious algorithm has competitive ratio at least
	\[
	\begin{cases}
		\min\{(1-\eta)/(1-H_n), \sqrt{M/m}\}, & \text{for negative error} \\
		\min\{(1+\eta)/(1-H_n), \sqrt{M/m}\}, & \text{for positive error}.
	\end{cases}
	\]
\end{theorem}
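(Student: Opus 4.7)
The plan is to mimic the adversarial construction of Theorem~\ref{th:lowerBound}, exploiting the fact that a non-oblivious algorithm still commits to a day-1 acceptance threshold $r_A \cdot p$ (with $r_A = s_A(p,m,M)/p$) that is independent of the unknown~$\eta$. Given any non-oblivious $A$, I would proceed by case analysis on how $r_A$ compares with $1 - H_n$ in the negative-error case (and symmetrically with $1 + H_p$ in the positive-error case), then match the resulting bounds against the two branches of the claimed min.

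For negative error, first suppose $r_A \le 1 - H_n$. Then for every admissible $\eta \in [0, H_n]$ one has $r_A \le 1 - \eta$, so Theorem~\ref{th:lowerBound}'s first subcase applies verbatim: the adversary's sequence $r_A p, p^*, p^*, \ldots$ (with $p = M$ and $p^* = (1-\eta)M$) forces $\comp(A) \ge (1-\eta)/r_A \ge (1-\eta)/(1-H_n)$, which is at least $\min\{(1-\eta)/(1-H_n), \sqrt{M/m}\}$. Suppose instead $r_A > 1 - H_n$. I would apply the second subcase of Theorem~\ref{th:lowerBound} at the extreme error $\eta = H_n$: with $p = M$ and $p^* = (1-H_n)M$, the sequence $p^*, m, m, \ldots$ is rejected on day~1 and yields $\comp(A) \ge (1-H_n) M/m$. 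Matching this against the two regimes in the description of \robust, in the regime $(1-H_n)\sqrt{M/m} \ge 1$ one has $(1-H_n) M/m \ge \sqrt{M/m}$, saturating the $\sqrt{M/m}$ cap; in the complementary regime the claimed min collapses to $\sqrt{M/m}$ and the same second-subcase sequence, applied at a suitable $\eta \in (1 - r_A, H_n]$, witnesses the bound via $(1-\eta)M/m \ge (1-\eta)/(1-H_n)$, which holds because $H_n \le (M-m)/M$ implies $M/m \ge 1/(1-H_n)$.

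The positive-error case is structurally symmetric and uses the positive subcases of Theorem~\ref{th:lowerBound}: the comparison is again against $1 - H_n$, since \robust's reservation price is $p(1 - H_n)$ regardless of the error's sign. When $r_A \le 1 + \eta$, the first positive subcase gives $\comp(A) \ge (1+\eta)/r_A \ge (1+\eta)/(1-H_n)$; otherwise, the second positive-error subcase at a minimal admissible positive $\eta$ forces the $\sqrt{M/m}$ cap through the $M/m$ bound just as in the negative case.

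The main obstacle is the narrow slice where $r_A$ sits just above $1 - H_n$ and $\eta \le 1 - r_A$: there, Theorem~\ref{th:lowerBound}'s direct lower bound $(1-\eta)/r_A$ is strictly smaller than $(1-\eta)/(1-H_n)$, so one cannot conclude pointwise by quoting the oblivious bound. The resolution is structural: in this slice one must have $1 - H_n < \sqrt{m/M}$, so the cap $\sqrt{M/m}$ inside the claimed min is active, and the second-subcase adversary at $\eta = H_n$ witnesses the $\sqrt{M/m}$ bound for the same algorithm. Thus combining the two adversarial sequences (first-subcase at the current $\eta$, second-subcase at $\eta = H_n$) pins down $\comp(A)$ on the Pareto frontier traced by $\min\{(1-\eta)/(1-H_n), \sqrt{M/m}\}$, and hence matches the upper bound achieved by \robust in Theorem~\ref{th:haware}.
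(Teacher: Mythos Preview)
Your overall strategy (case-split on whether $r_A \le 1-H_n$ and invoke Theorem~\ref{th:lowerBound}) matches the paper's, and the easy case $r_A \le 1-H_n$ is handled correctly. The gap is in the other case, specifically your treatment of the ``narrow slice.'' You assert that when $r_A > 1-H_n$ and $\eta \le 1-r_A$, one \emph{must} have $1-H_n < \sqrt{m/M}$. This is not true: take $M/m = 4$, $H_n = 0.3$, and $r_A = 0.8$. Then $1-H_n = 0.7 > 0.5 = \sqrt{m/M}$, yet the slice $\eta \in [0, 1-r_A] = [0,0.2]$ is non-empty. At $\eta = 0.1$ Theorem~\ref{th:lowerBound} only gives $\comp(A) \ge 0.9/0.8 = 1.125$, strictly below the claimed $\min\{0.9/0.7,\,2\} \approx 1.286$. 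Your negative-error adversary at $\eta = H_n$ does show this particular algorithm is not robust (since $(1-H_n)M/m = 2.8 > 2$), but that is a different $\eta$; you have not established the pointwise bound at $\eta = 0.1$, and your stated structural ``resolution'' does not hold.

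What the paper does differently is to couple the two error parities rather than treat them separately. It first restricts to robust algorithms and argues by contradiction that $r_A \le 1-H_n$ must hold: assuming $r_A > 1-H_n$, the negative-error adversary at $\eta_n = 1-r_A+\epsilon$ (which is admissible precisely because $r_A > 1-H_n$) forces $r_A \le \sqrt{m/M}+\epsilon$ via robustness; then the \emph{positive}-error adversary at any small $\eta_p > 0$ gives $\comp(A) \ge (1+\eta_p)/r_A \ge (1+\eta_p)/(\sqrt{m/M}+\epsilon) > \sqrt{M/m}$, contradicting robustness. This cross-parity step is the missing ingredient in your argument. Once $r_A \le 1-H_n$ is secured, both error parities reduce to the first subcase of Theorem~\ref{th:lowerBound}, and the inequality $(1\pm\eta)/r_A \ge (1\pm\eta)/(1-H_n)$ that you wrote down in the positive-error paragraph (which you used without justification) becomes valid.
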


\begin{proof}
	Let $A$ denote a non-oblivious algorithm. Note that $A$ must be robust, otherwise its competitive ratio is larger than $\sqrt{M/m}$ for some value of the error. Let $s'_A(p,m,M,H_n,H_p)\in[m,M]$ be the smallest price for which 
	$A$ accepts on day 1. Define $r_A=s'_A(p,m,M,H_n,H_p)/p$. 
	It must hold that $r_A <1$; otherwise, by Theorem~\ref{th:lowerBound}, $\comp(A)=M/m$ for positive values of error smaller than $r_A-1$, contradicting the robustness of $A$.
	%\spyros{Here you had "p" as one of the function arguments, I removed it, correct?}  Shahin:Right

Let $\epsilon>0$ be any small value such that $\epsilon  < \sqrt{m/M} \cdot \eta_p$.
	By Theorem~\ref{th:lowerBound}, for a value of negative error equal to $\eta_n = 1-r_A + \epsilon$, the competitive ratio of $A$ must be at least $(1-\eta_n)M/m = (r_A-\epsilon)M/m$. Given that $A$ is robust, it follows that $(r_A-\epsilon)M/m \leq \sqrt{M/m}$, that is, $r_A \leq \sqrt{m/M}+\epsilon$.

	We further use the assumption that $A$ is robust to establish that $r_A \leq 1-H_n$. By way of contradiction, suppose that $H_n> 1-r_A$.
	For a fixed value of positive error $0 < \eta_p \leq H_p$, we have 
\begin{eqnarray*}
		\comp(A) &\geq& \min \{(1+\eta_p)/r_A, M/m\}  \geq \\
		&\min& \{(1+\eta_p)/(\sqrt{m/M}+\epsilon), M/m\} > \sqrt{M/m}.
	\end{eqnarray*}
	%where the last inequality follows from the fact that $\epsilon < \eta_p \sqrt{m/M}$.

	Since $r_A \leq 1-H_n$, for any values of negative error $\eta$, we have $\eta \leq 1-r_A$ and by Theorem~\ref{th:lowerBound}, $\comp(A) \geq (1-\eta)/r_A \geq (1-\eta)/(1-H_n)$. For values of positive error $\eta$, $\comp(A) \geq \min\{(1+\eta)/r_A,M/m\} \geq (1+\eta)/(1-H_n)$. 
\end{proof}

\newcommand{\alg}{\textsc{Alg}\xspace}
\newcommand{\Alg}{\alg}
\section{Query-based Predictions}
\label{sec:binary}

In this section, we study the setting in which the prediction is in the form of responses to $n$ binary queries $Q_1, \ldots ,Q_n$, for some fixed $n$. Hence, the prediction $P$ is an $n$-bit string, where the $i$-th bit is the response to $Q_i$. 
We assume that the algorithm knows an upper bound $H$ on $\eta$. 
Therefore, the responses to least $k-H$ queries are guaranteed to be {\em correct}, and the responses to at most $H$ queries may be {\em incorrect} (or {\em wrong}). We assume the setting of non-oblivious algorithms in this model. This is because without an upper bound on the error, the algorithm is in a state of complete lack of knowledge concerning the truthfulness of the queries, and it is not obvious how to use them in a meaningful way.

We present two algorithms, both of which use {\em comparison queries} concerning the best price $p^*$. That is, the queries are in the form of ``Is $p^* \leq b$, for some given value $b$?''. In our first algorithm, the values of $b$ form a strictly increasing sequence, which allows us to narrow $p^*$ within an interval (range) from a candidate set of $n$ intervals. The second algorithm implements a {\em robust} version of binary search, in which the candidate set of intervals is exponential in $n$, hence we can narrow $p^*$ within an interval of much smaller size, and thus obtain a much better estimate on $p^*$.

\subsection{Robust Linear Interval Search algorithm}
Define $m=a_0, a_1, \ldots, a_n=M$ so that $r_n = a_1/a_0 = a_2/a_1 = \ldots = a_n/a_{n-1}$, which implies that $r_n = (M/m)^{1/n}$. %\shahin{Dehou said the exponent should be $n$ instead of $n+1$. To be checked ...}
Define the $n$ intervals $E_1, \ldots, E_n$, where $E_i = [m, a_{i})$ (we have $1\leq i \leq n$). 
Query $Q_i$ asks whether the best price $p^*$ is in $E_i$ or not, and the response denoted by ``1'' or ``0'', respectively. 

Consider the $n$-bit string $P$ formed by responses to $Q_1, \ldots, Q_n$. 
If all responses were correct, then $P$ would consist of $j$ 0s followed by $(n-j)$ 1s for some 
$j\in [1,n]$. This would prove  that $p^*$ is in the range $[a_j,a_{j+1})$. The algorithm then could use $a_j$ as its reservation price, which yields a competitive ratio of at most 
${\frac{ a_{j+1}}{a_j}} = (M/m)^{1/n}$.

We describe the algorithm {\em Robust Linear Interval Search} (\RLIS), that works in the presence of error. 
From the way queries are defined, it may be possible to detect and correct some of the wrong responses in $P$ as follows. 
Suppose the response to $Q_i$ is 1, while the response to at least $H+1$ queries $Q_j$ that come after  $Q_i$ (that is, $j>i$) are 0. 
%All these $H+1$ responses contradict the response to $Q_i$. 
Given that the number of incorrect responses cannot exceed $H$, we infer that the response to $Q_i$ must be incorrect. With a similar argument, if the response to $Q_i$ is 0 and the responses to at least $H+1$ queries that come before $Q_i$ are 1, then the response to $Q_i$ must be incorrect. 
Thus \RLIS starts with a {\em preprocessing phase} which corrects these two types of incorrect  responses. This results in an updated prediction string $P'$ in which every 1-bit is followed by at most $H$ 0-bits and every 0-bit is preceded by at most $H$ 1-bits.  

If all responses in $P'$ are 0, \RLIS sets its reservation price to $a_{n-H+1}$. Otherwise, let $i_1$ denote the index of the first 1 in $P'$, and let $\alpha \leq H$ be the number of 0s after index $i_1$. Define $l = \max\{0, i_1-(H+1-\alpha)\}$, then \RLIS sets its reservation price to $a_{l}$. %We refer to the Appendix for pseudocode.

Algorithm~\ref{alg:RLIS} describes \RLIS in pseudocode. The queries that \RLIS asks and hence their responses can be pre-computed and stored in the prediction array $P$. Note that the algorithm has two phases: a pre-processing phase in which it detects and corrects some of the predictions in $P$ (Lines~\ref{linePreSt} to~\ref{linePreEnd}). The pre-processing phase results in an updated prediction array $P'$ which is used to set the reservation price (Lines~\ref{linePostSt} to~\ref{linePostEnd}). In terms of time complexity, it is straightforward to verify that both phases of the algorithm can be completed in $O(n)$, and therefore \RLIS sets its reservation price in time $O(n)$.

\definecolor{cmntscolor}{rgb}{0.6, 0.6, 0.6}
\algnewcommand{\LineComment}[1]{\ \\ $\triangleright${\color{cmntscolor}{\ #1}}}

\begin{algorithm}[!t]
	\caption{Robust Linear Interval Search (\RLIS)}\label{alg:RLIS}
	\begin{algorithmic}[1]
		\State \textbf{Input:} $m,M$ (lower and upper bounds for the best price); binary string $P = (p_1, p_2, \ldots, p_n)$ of responses; an upper bound $H$ for the number of incorrect answers.  \State \textbf{Output:} a reservation price $rp$   
		
		\LineComment{Preprocessing phase: detect and correct errors}
		\State $P' \gets $ a copy of $P$
		\label{linePreSt}
		\For{$i \gets 1$ to $n$}
		\State $suc(i) \gets $ no. indices $j$ such that $j >i$ and $P[j]=0$
		\State $pre(i) \gets $ no. indices $j$ such that $j <i$ and $P[j]=1$
		\If{ ($i=1$ \& $suc(i) \geq H+1)$ or \\ \hspace*{.8cm}
			($i=0$ \& $pre(i) \geq H+1$)}
		\State $P'[i] \gets 1-P'[i]$ \Comment{\color{cmntscolor}{fix the detected error}\color{black}}
		\EndIf 
		\EndFor 
		\LineComment{Setting the reservation price}
		\label{linePreEnd}\If{($\exists$ an index $i$ such that $P'[i] = 1$)}
		\label{linePostSt}\State $i_1 \gets$ the smallest $i$ so that $P'[i] =1$.
		\State $\alpha \gets suc(i)$ \Comment{{\color{cmntscolor}$(\alpha \leq H)$}}
		\State $l = \max \{0, i_1 - (H + 1 -\alpha)\}$
		\Else
		\State $l = n-H+1$
		\EndIf
		\State  $a_{l-1} \gets m \cdot (M/m)^{(l-1)/n}$	\Comment{{\color{cmntscolor}{the reservation price}}} \\		\Return $a_{l-1}$ \label{linePostEnd}
	\end{algorithmic}
\end{algorithm}

%, and let $\alpha \leq H$ be the number of 0s in $P$ in indices $j>i_1$ 
%(i.e, the number of queries following $Q_{i_1}$ with a negative response. 
%Then RLIS sets its reservation price to be  $a_{l}$, where 
%$l= \max\{0, i_1-(H+1)\}$. 
%\spyros{Is the above correct? Do we count all 0s in indices after $i_1$ or the continuous 0s following $i_1$, until an 1 appears? What if this $i_1$ does not exist, that is, $P$ consists entirely of 0s?}

\begin{theorem}\label{th:main:linear}
	Algorithm \RLIS has competitive ratio at most $(M/m)^{2H/n}$.
\end{theorem}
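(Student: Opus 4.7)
My plan is to show that the reservation price $q=a_l$ chosen by \RLIS satisfies two conditions simultaneously: $q\leq p^*$ (so that the first price exceeding $q$ is accepted), and $p^*/q \leq (M/m)^{2H/n} = r_n^{2H}$. The proof begins by identifying the true boundary index $j^*$ such that $p^*\in [a_{j^*}, a_{j^*+1})$, so the noise-free prediction would be $P^*=0^{j^*}1^{n-j^*}$; the observed $P$ differs from $P^*$ in $\eta\leq H$ positions, which I partition into $F_0\subseteq [1,j^*]$ (positions where $P$ has an incorrect $1$) and $F_1\subseteq (j^*,n]$ (positions where $P$ has an incorrect $0$).

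The first technical step is to argue that preprocessing only corrects errors. If preprocessing flips $P[i]$ from $1$ to $0$ because $\mathrm{suc}(i)\geq H+1$, then $i$ must lie in $F_0$: otherwise $i>j^*$, and the $\geq H+1$ subsequent $0$s in $P$ would all have to lie in $F_1\subseteq (j^*,n]$, contradicting $|F_1|\leq H$. A symmetric pigeonhole argument handles the $0\to 1$ direction, so the residual error sets $F_0',F_1'$ after preprocessing satisfy $|F_0'|+|F_1'|\leq H$. Building on this, I would prove two ``stability'' claims: $P'[i]=0$ for every $i\leq j^*-2H$, and $P'[i]=1$ for every $i\geq j^*+2H+1$ (when such positions exist). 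The harder subcase is $P[i]=1$ with $i\in F_0$, where one estimates $\mathrm{suc}(i)\geq (j^*-i)-(|F_0|-1)+|F_1| \geq 2H-(H-1) = H+1$, triggering the corrective flip; the $P[i]=0$ subcase uses $\mathrm{pre}(i)\leq |F_0|\leq H$, which precludes a flip.

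The consequence is that $i_1\in [j^*-2H+1,\, j^*+2H+1]$, apart from the special case $P'=0^n$, which, by the preceding argument, only occurs when $j^*$ is within roughly $H$ of $n$ and is handled separately by the prescribed reservation $a_{n-H+1}$. I would then match the general-case bound on $i_1$ with the formula $l=\max\{0,\,i_1-(H+1-\alpha)\}$ through a case split on whether $i_1\leq j^*$ or $i_1>j^*$. In the former case $i_1$ is itself an uncorrected $F_0'$ error; in the latter $F_0'=\emptyset$ and the $i_1-j^*-1$ positions in $(j^*,i_1)$ are all uncorrected $F_1'$ errors. In either case, $\alpha$ together with the displacement $|i_1-j^*|$ can be bounded by the residual error budget $H$, and substituting into the definition of $l$ yields $j^*-2H+1 \leq l \leq j^*$. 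This immediately gives $a_l\leq a_{j^*}\leq p^*$ (so \RLIS accepts a price at least $a_l$) and $p^* < a_{j^*+1}\leq a_{l+2H}$, whence the competitive ratio is at most $r_n^{2H}$.

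The main obstacle is this final case split. The formula $l=\max\{0,\,i_1-(H+1-\alpha)\}$ is crafted to simultaneously handle two qualitatively different situations---an $i_1$ that sits to the \emph{left} of $j^*$ because of an uncorrected $0\to 1$ error, and an $i_1$ that sits to the \emph{right} of $j^*$ because of a run of uncorrected $1\to 0$ errors---and translating the $\alpha$-correction into the tight bound $l\geq j^*-2H+1$ in both cases requires careful bookkeeping. Outside of this step, the preprocessing analysis is clean pigeonhole counting.
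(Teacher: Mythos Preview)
Your route is genuinely different from the paper's. The paper never names the ground-truth index $j^*$; it works entirely inside $P'$, casing on whether a $0$ appears after $i_1$ and, when one does, introducing $j_0$, the \emph{last} such $0$. It then bounds $p^*$ from below by splitting on whether $P'[i_1]$ is correct, and from above by splitting on whether $P'[j_0]$ is correct, each time charging the remaining error budget to the run of $0$s before $i_1$ or of $1$s after $j_0$. You instead anchor everything to $j^*$: you prove preprocessing is monotone, derive the two ``stability'' windows, and then split on $i_1\le j^*$ versus $i_1>j^*$. Both are error-budget arguments and reach the same endpoint; in fact, once you carry out your split the stability claims become redundant (in your Case~1 the dependence on $i_1$ cancels when you expand $\alpha$, giving $l=j^*-H-1-f_0+f_1$ directly). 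Your decomposition is cleaner to read; the paper's is shorter because it skips the stability lemmas.

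One concrete issue: your target inequality $l\ge j^*-2H+1$ is off by one. Take $H=2$, $j^*=10$, with the two errors at positions $7$ and $10$ (both in $F_0$); preprocessing does nothing, $i_1=7$, $\alpha=2$, and $l=7-(3-2)=6=j^*-2H$, so $p^*/a_l$ can approach $r_n^{2H+1}$. In your Case~1 notation, $l=j^*-H-1-f_0+f_1$ with $f_0+1+f_1\le H$ only yields $l\ge j^*-2H$. The paper's own proof has precisely the same slip: its final quotient is $a_{j'}/a_l$ with $j'-l=(H+i_1+\alpha)-(i_1-H-1+\alpha)=2H+1$, yet it records the bound as $r_n^{2H}$. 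So the gap is real but shared; correctly executed, your argument delivers the same $(M/m)^{(2H+1)/n}$ that the paper's computation actually establishes.
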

\begin{proof}
	First suppose all responses in $P'$ are 0. There can only be a suffix of at most $H$
	0-responses in $P'$ which are incorrect, that is, $p^*$ is 
	in the range $[a_{n-(H-1)},M]$.
	 Given that \RLIS has reservation price $a_{n-(H-1)}$, and $p^* \leq M$, 
	 its competitive ratio is at most $M/a_{n-(H-1)} =  (M/m)^{(H-1)/n}$.

	Next, suppose that $P'$ contains a 1-bit. We consider two cases, depending on the presence or absence of a 0-bit in $P'$ after index $i_1$. If there is no 0-response after index $i_1$ (that is $\alpha=0$), then $p^* \leq a_{i_1+H-1}$ (because at most $H$ queries can be wrong), while the profit of \RLIS is at least $a_{l}$. The competitive ratio is thus at most $a_{i_1+H-1}/a_l \leq (M/m)^{2H/n}.$
	
	Next, suppose that there is a 0-response after index $i_1$, and let $j_0>i_1$ denote the index of the last such 0 in $P'$. We will show that $p^* \geq a_l$, where $l$ is the reservation price of
	\RLIS. Suppose first that the response to $Q_{i_1}$ is incorrect. Then, $p^* \geq
	a_{i_1}$. Suppose next that the response to $Q_{i_1}$ is correct. In this case, the  $\alpha$ 0-responses after index $i_1$ must be wrong. Since there can be up to $H$ errors, at most $(H-\alpha)$ 0-responses that immediately precede index $i_1$ can be wrong. Therefore, $p^* \geq a_{l}$ where $l = \max\{0, i_1-(H+1-\alpha)$\}. This also implies that \RLIS has profit at least $a_l$.

To finish the proof, we need an upper bound on $p^*$. If the response to $Q_{j_0}$ is incorrect, then $p^* \leq a_{j_0}$. Otherwise, there are $j_0-i_1-\alpha$ wrong 1-responses before $j_0$, from the definition of $j_0$. Therefore, up to $H -(j_0 -i_1- \alpha)$ 1-responses that follow $j_0$ can also be wrong. That is, $p^*$ can be as large as $a_{j'}$, where $j' = \min \{n, j_0 + H - (j_0 -i_1- \alpha)\} = \min \{n, H +i_1+\alpha\} $. 
	Given that the reservation price is $a_l$, the competitive ratio of the algorithm is therefore at most ${a_{j'}/a_{l}} \leq (r_n)^{2H} = (M/m)^{2H/n}$.
	%
	%, which is no larger than $(M/m)^{2H/(n+1)}$ for $H\geq 1$.
	%	Similarly, if all responses in $P$ are 1, the algorithm selects a reservation price of $a_l$ for $l = n-H+1$. Given that $p^* <M$, the competitive ratio would be at most $M/a_l = (M/m)^{(H+1)/(n+1)}$. 
	%
	%Next, suppose that $P$ contains both 0 and 1 responses, and let $j$ denote the index of the last ``no" in $P'$. Note that the responses to $Q_i$ and $Q_j$ contradict each other. 
	%We investigate how early $p^*$ can appear in the range $(m,M]$.
	%
\end{proof}

\subsection{Robust Binary Interval Search algorithm}	 
\label{subsec:robust.binary}

Algorithm \RLIS uses the queries so as to select a reservation price from $n$ candidate intervals. We will now show how to increase this number to $2^n$ using a new {\em Robust Binary Search} algorithm 
(\RBIS). Partition the interval $[m,M]$ into $2^n$ intervals $L_1, L_2, \ldots, L_{2^n}$, where $L_i = (a_{i-1},a_{i}]$, $a_0=m$, and $a_{2^n} = M$. %The last interval is defined as $L_{1} = [a_1,a_2)]$ so that the entire range $[m,M]$ is covered by these $2^n$ partitions. 
We define the $a_i$'s so that $\rho = a_1/a_0 = a_2/a_1 = \ldots = a_{2^n}/a_{2^{n-1}}$, where $\rho =(M/m)^{1/2^n}$. 

Suppose that $L_1, \ldots ,L_{2^n}$ correspond to the $2^n$ leaves of a binary tree $T$ of height $n$, and that the best price $p^*$ is in the interval $L_x$ for some $x\in[1,2^n]$. With perfect queries (zero error), it is possible to find $L_x$ using binary search on $T$, which leads
to a competitive ratio  $a_x/a_{x-1} =  (M/m)^{1/2^n}$, by choosing a reservation price equal to 
$a_{x-1}$. This is the approach of~\cite{clemente2016advice}.
%\spyros{isn't this $a_{x-1}$? i corrected from $a_{x}$}. Shahin: right
Unfortunately this simple approach is very inefficient even if a single error occurs (e.g., if $Q_1$ receives a wrong response, the search will end in a leaf $L_y$, where $|x-y|$ is as large as $2^{n/2}$.)

Searching with erroneous queries is a well-studied topic, see e.g., the book~\cite{DBLP:series/eatcs/Cicalese13}. A related problem to ours was studied in~\cite{RivestMKWS80} and~\cite{DisserK17}, however there are important differences with respect to our setting. First, these works consider a ``dual'' problem to ours in which the objective is to minimize the number of queries so as to locate an {\em exact} leaf in the binary tree. Second, there are certain significant implementation issues that need to be considered. Specifically,~\cite{DisserK17} assumes that  
 when reaching a leaf, an oracle can respond whether this is the sought element or not (in other words, the algorithm receives an error-free response to a query of the form ``is an element exactly equal to $x$''). For our problem and, arguably, for many other problems with query-based predictions, this assumption cannot be made. Moreover~\cite{RivestMKWS80} does not have an efficient implementation, specifically in comparison to~\cite{DisserK17}. We propose a new algorithm using some ideas of~\cite{DisserK17} that is applicable to our problem, and has an efficient implementation.

 % For example, in the context of our problem, no query of the form ``Is $p^*$ equal to a given value $x$'' should be responded with absolute certainty. We will thus give a new algorithm for our setting based on some ideas of~\cite{DisserK17}. The latter has an efficient implementation, unlike the algorithm of~\cite{RivestMKWS80}. \spyros{we still need a good justification.}   

% Previous work on binary search with erroneous responses include~\cite{RivestMKWS80,DisserK17}, where it is assumed that the search ends when the search element is located and the goal is to minimize the number of queries. The algorithm in~\cite{RivestMKWS80} is based on partitioning the space of all possible query responses using a weighting argument, and therefore is not suitable for practical purposes. The algorithm of~\cite{DisserK17} assumes the response to each query is from the set $\{ <, =, >\}$, and if `=', it is guaranteed to be true. Unfortunately, such guarantees are not relevant for our setting of the online search problem. In what follows, we present a new algorithm which is suitable for the online search problem. In doing so, we borrow ideas from~\cite{DisserK17}. 

\paragraph{Algorithm description}
Recall that $T$ is a binary search tree with leaves $L_1, L_2, \ldots, L_{2^n}$ and 
that we search for the leaf $L_x$. We denote by 
$l(v)$, $r(v)$ the left and right child of $v$, respectively, and by $T_v$ the subtree rooted at $v$.

We describe the actions of the algorithm. Suppose the algorithm is at node $v$ at the beginning of iteration $i$ (in the first iteration, $v$ is the root of $T$). The algorithm first asks a {\em main query}, defined as follows:  ``Is $x \leq q$?", where $q$ is such that $L_q$ is the rightmost leaf of the left subtree of $v$. We denote by  {\tt main(v)} the response to this query.  As we discuss shortly, the search may visit the same node multiple times, so we emphasize that {\tt main(v)} is the response to the most recent main query at $v$.
Next, the algorithm finds the first ancestor of $v$ in $T$, say $w$, for which {\tt main(w)}$\neq${\tt main(v)}. We denote this ancestor of $v$ by $anc(v)$, if it exists, and define $anc(v)=\emptyset$, otherwise. The algorithm continues by asking a {\em checkup query} which is a repetition of the main query asked for $w$. We denote the response to the checkup query as {\tt check(v)}. The algorithm continues by taking one of the following actions, after which iteration $i+1$ begins: %(see Figure~\ref{fig:algdes} for an illustration):
\begin{itemize}
	\item \textbf{Move-down:} If  $anc(v)=\emptyset$ or {\tt check(v)} = {\tt main($anc(v)$)}, \RBIS moves one level down in $T$. That is, if {\tt main(v)} is Yes (respectively No), \RBIS moves to $l(v)$ (respectively $r(v)$).
	\item \textbf{Move-up}: If {\tt check(v)}$\neq${\tt main($anc(v)$)}, \RBIS moves one level up to the parent of $v$. In this case, \RBIS increments a counter {\tt mu}, which is originally set to $0$. 
\end{itemize}

\newcommand{\mend}{\ensuremath{mu_{end}}\xspace}
The algorithm continues as described above until it exhausts its number $n$ of queries. Suppose the search stops at some node $u$, and let $a_u$ denote the $(H-{ mu_{end}})$-th ancestor of $u$ (or the root, if such an ancestor does not exist), where \mend is the content of {\tt mu} at the end of the search.
%\spyros{I do not like this, not sure what we say here}. 
Let $L_l$ be the leftmost leaf in $T_{a_u}$, i.e the subtree rooted at $a_u$. 
Then \RBIS returns this leftmost leaf in $T_{a_u}$. In particular, for the online search problem, the algorithm sets its reservation price to $a_{l-1}$. %Please refer to Appendix for pseudocode.

Algorithm~\ref{alg:RBIS} describes \RBIS in psuedocode. Queries of \RBIS depend on the location of the search node in the search tree $T$, which indeed depends on the errors in previously responded queries. As such, unlike \RLIS, it is not possible to provide the responses to all queries in advance. Therefore, we assume \RBIS has access to a \emph{response oracle} that answers its queries in real time.  The algorithm has an initializing phase (Lines~\ref{lineInitSt} to~\ref{LineInitEnd}), a search phase where it applies Moves-down and Moves-up operations in the search tree (Lines~\ref{lineSearchSt} to~\ref{lineSearchEnd}) and a final phase where it computes the reservation price (Lines~\ref{LineFinalSt} to~\ref{LineFinalEnd}). All these phases take $O(n)$ time, and therefore \RBIS sets its reservation price in time $O(n)$.

%\scalebox{.8}
{
\begin{algorithm}[H]
	\caption{Robust Binary Interval Search (\RBIS)}\label{alg:RBIS}
	\begin{algorithmic}[1]
		\State \textbf{Input:} $m,M$ (lower and upper bounds for the best price); A response oracle $OR$; an upper bound $H$ for the number of incorrect answers.  
		\State \textbf{Output:} a reservation price $rp$ 
		\LineComment{Initializing}
		\State $T\gets$ a full binary~tree~with~$2^n$~leaves~$L=~(L_1, \ldots, L_{2^n})$
		\For{every node $x \in T$}\label{lineInitSt}
		\State $\mathtt{main(x)} \gets -1$ %\Comment{Initially, all responses are -1}
		\EndFor 
		\State $uq \gets 0$ \Comment{{\color{cmntscolor}{no. used queries}}}
		\State $mu \gets 0$ \Comment{{\color{cmntscolor}{no. Move-up operations}}}
		\State $v \gets$ root of $T$ \label{LineInitEnd}
		\LineComment{Searching in the tree} \label{lineSearchSt}
		\While{$uq \leq k$}
%		\LineComment{Ask the main query of $v$}
		\State $T_{\text{left}_v} \gets $ subtree of $T$ rooted at the left child of $v$
		\State $q_v \gets $ index of the right-most leaf of $T_{\text{left}_v}$ in $L$
		\State $Q_{uq} \gets $ ``is $p^* \leq m \cdot (M/m)^{q_v/2^n}$?" \Comment{{\color{cmntscolor}{main query of $v$}}}
		\State $\mathtt{main(v)} \gets OR.$response$(Q_{uq})$
		\State $uq \gets uq +1$
		\State $w \gets $ the first ancestor of $v$ s.t. $\mathtt{main(w)} \neq \mathtt{main(v)}$  \Comment{{\color{cmntscolor}{$w$ is $anc(v)$}}}		
%		\LineComment{$w$ is $anc(v)$}
		\If{$w\neq \phi$}
%		\LineComment{the main query for $w$ (checkup query at $v$)}
		
		\State $T_{\text{left}_w} \gets $ subtree of $T$ rooted at the left~child~of~$w$
		\State $q_w \gets $ index of the right-most leaf~of~$T_{\text{left}_w}$~in~$L$
		\State $Q_{uq} \gets $ ``is $p^* \leq m \cdot (M/m)^{q_w/2^n}$?" \Comment{{\color{cmntscolor}{the checkup query at $v$}}}
		\State $\mathtt{check(v)} \gets OR.$response$(Q_{uq})$
		\State $uq \gets uq +1$
		\EndIf
		\If{($w = \phi$) or ($\mathtt{check(v)}$ = $\mathtt{main(w)}$ )}
		%\LineComment{Move-down operation}
		\If {$\mathtt{main(v) = }$ ``Yes"} \Comment{{\color{cmntscolor}{Move-down operation}}}
		\State $v \gets $ left child of $v$
		\Else
		\State $v \gets $ right child of $v$
		\EndIf
		\Else 
		%\LineComment{Move-up operation}
		\State $v \gets $ parent of $v$ \Comment{{\color{cmntscolor}{Move-up operation}}}
		\State $mu \gets mu + 1$
		\EndIf	
		\EndWhile \label{lineSearchEnd}
		\LineComment{Search ends; setting the reservation price}\label{LineFinalSt}
		\State $a_u \gets$ the $(H-mu)$'th ancestor of $u$
		\State $T_{a_u} \gets $ the tree rooted at $a_u$
		\State $L_l \gets$ the index of the leftmost leaf of $T_{a_u}$ in $L$
		\State $a_l \gets m \cdot (M/m)^{l/2^n}$\Comment{{\color{cmntscolor}{set the reservation price}}} \\
		\Return $a_l$\label{LineFinalEnd}
	\end{algorithmic}
	
\end{algorithm}
}

% \begin{figure}
% 	\centering
% 	\scalebox{.8}{\algdesc}
% 	\caption{An illustration of \RBIS. Here, {\tt main(v)} is Yes, and {\tt main($anc(v)$)} is No.}
% 	\label{fig:algdes}
% \end{figure}

\paragraph{Analysis} We first show the following useful lemmas. 
%Recall that $x$ is the index of the sought leaf, i.e., $p^* \in L_x$. 
% The following lemma shows that This ensures that the reservation price is no larger than $p^*$.

\begin{lemma}
	Suppose a Move-down operation takes place at node $v$, and let $ch(v)$ denote the child of $v$ to which the search moves. Then either $L_x$ is in $T_{ch(v)}$ or at least one of the responses {\tt main(v)} and {\tt check(v)} are incorrect.
	\label{lemma:helper.claim} 
\end{lemma}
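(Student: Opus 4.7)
The plan is to prove the lemma by a straightforward case analysis on the correctness of $\mathtt{main(v)}$, so that the conclusion is either directly forced by the semantics of the main query or holds trivially because one of the two responses is wrong.

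First I would recall the definition of the main query at $v$: it asks whether $x \le q$, where $L_q$ is the rightmost leaf of $T_{l(v)}$. Hence the leaves of $T_{l(v)}$ are exactly $L_1,\dots,L_q$ while the leaves of $T_{r(v)}$ are $L_{q+1},\dots$ Therefore, the truthful answer to the main query is ``Yes'' iff $L_x \in T_{l(v)}$ and ``No'' iff $L_x \in T_{r(v)}$.

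Next I would split on the correctness of $\mathtt{main(v)}$. In the first case, assume $\mathtt{main(v)}$ is correct. The Move-down rule sends the search to $l(v)$ when $\mathtt{main(v)}=$``Yes'' and to $r(v)$ when $\mathtt{main(v)}=$``No''. By the observation above, a correct ``Yes'' means $L_x \in T_{l(v)} = T_{ch(v)}$, and a correct ``No'' means $L_x \in T_{r(v)} = T_{ch(v)}$. Either way $L_x \in T_{ch(v)}$, so the first disjunct of the conclusion holds. In the second case, assume $\mathtt{main(v)}$ is incorrect; then the second disjunct of the conclusion (``at least one of $\mathtt{main(v)}$ and $\mathtt{check(v)}$ is incorrect'') holds immediately, regardless of $L_x$ or of $\mathtt{check(v)}$.

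There is no serious obstacle; the only potentially subtle point is that the Move-down branch is triggered by the condition on $\mathtt{check(v)}$ and $\mathtt{main}(anc(v))$, not by $\mathtt{main(v)}$ itself, so one has to be careful to note that $\mathtt{check(v)}$ only controls \emph{whether} a Move-down happens, while the child chosen once a Move-down happens is determined by $\mathtt{main(v)}$ alone. This is precisely why the case analysis on $\mathtt{main(v)}$ suffices and why $\mathtt{check(v)}$ appears only in the (weaker) second disjunct of the conclusion, which is used later to charge the cost of an ``incorrect'' Move-down to a query error.
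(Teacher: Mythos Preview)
Your Case~1 contains a real gap. You assert that ``the leaves of $T_{l(v)}$ are exactly $L_1,\dots,L_q$'', but this holds only when $v$ is the root. For an arbitrary internal node $v$, the leaves of $T_v$ form some block $L_a,\dots,L_b$ with $a\ge 1$, so the leaves of $T_{l(v)}$ are $L_a,\dots,L_q$. Consequently a \emph{correct} ``Yes'' to $\mathtt{main(v)}$ (i.e.\ $x\le q$) does \emph{not} imply $L_x\in T_{l(v)}$: it could be that $x<a$, so that $L_x$ lies entirely outside $T_v$. Symmetrically, a correct ``No'' does not force $L_x\in T_{r(v)}$. This situation arises precisely when an earlier wrong response has steered the search into a subtree that does not contain $L_x$, and yet the current main query is answered truthfully.

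Concretely: suppose $x>2^{n-1}$ but the root's main query is (wrongly) answered ``Yes'', so the search moves to $v=l(\text{root})$. At $v$ the main query asks whether $x\le 2^{n-2}$; the correct answer is ``No'', and assume it is given correctly. Then $\mathtt{main(v)}$ is correct and $ch(v)=r(v)$, but $L_x\notin T_{r(v)}$ since $L_x$ is in the right half of the whole tree. Your Case~1 would wrongly conclude $L_x\in T_{ch(v)}$ here. (A Move-down at this $v$ requires $\mathtt{check(v)}=\mathtt{main}(anc(v))=$ ``Yes'', which is incorrect, so the lemma does hold---just not by your argument.)

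This is exactly why the paper's proof invokes $anc(v)$ and the checkup query. When $\mathtt{main(v)}$ is correct but $L_x\notin T_{ch(v)}$, the paper uses the fact that $v$ lies in the child of $anc(v)$ dictated by $\mathtt{main}(anc(v))$ to argue that the truthful answer at $anc(v)$ contradicts $\mathtt{main}(anc(v))$; since a Move-down forces $\mathtt{check(v)}=\mathtt{main}(anc(v))$, it follows that $\mathtt{check(v)}$ is incorrect. Your argument never uses the Move-down hypothesis except to name $ch(v)$, so it cannot make this distinction; the presence of $\mathtt{check(v)}$ in the lemma's disjunction is essential, not cosmetic.
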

\begin{proof}
	If $anc(v)=\emptyset$,  %there is no ancestor $w$ of $v$ whose main query receives an opposite response, 
	then all previous main queries have received the same response (involving Move-down operations, either all to the left or all to the right). Therefore, if {\tt main(v)} is correct, then for every node $y$ on the path from the root to $v$, {\tt main(y)} is also correct and 
	$L_x$ is in $T_{ch(v)}$, hence the lemma follows.

	Next, suppose that {\tt main(v)} is correct, but $L_x \neq T_{ch(v)}$. Since the algorithm moves down, it must be that {\tt main(anc(v))}={\tt check(v)}. To prove the lemma, it suffices to show these two responses are wrong. Without loss of generality, suppose that  {\tt main(v)} is Yes and  {\tt main(anc(v))} is No %, as in Figure~\ref{fig:algdes} 
	(the opposite case is handled symmetrically). 
	Given that {\tt main(v)} is correct, $L_x$ must be either in $T_{ch(v)}$ or in the left subtree of 
	$anc(v)$. In the former case, the lemma follows directly. In the latter case, {\tt main($anc(v)$)}, which is precisely {\tt check(v)},
	is incorrect, and thus the lemma again follows.
	% By the assumption in the lemma, however, continuing the search on the left subtree of $v$ results in moving down into a subtree that does not contain $L_x$. That is, $L_x$ must be located on left side of the leftmost leaf $L^*$ of $T_v$. Given that $anc(v)$ is the first ancestor of $v$ with an opposite to its main query (compared to $v$), it will be the deepest ancestor of $v$ that finds $v$ on its right subtree. That is, all leaves of $T$ that are located on the left of $L^*$  are located also on the left of $anc(v)$. Therefore, the response to the main query of $anc(v)$ is wrong, settling the claim.
\end{proof}

The proof of the following lemma is based on Lemma~\ref{lemma:helper.claim}, by showing that the search ends sufficiently deep in the tree.

\begin{lemma}\label{lemma:twoclaims} %[Appendix]
	The following hold: 
(i) Node $a_u$ is at depth at least $\lfloor n/2\rfloor-2H$ in $T$; and
(ii)  $L_x$ is a leaf of $T_{a_u}$.
\end{lemma}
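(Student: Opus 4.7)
The strategy is to establish part~(i) by counting queries and bounding $\mend$, and to deduce part~(ii) from Lemma~\ref{lemma:helper.claim} through an analogous error-charging argument.

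For part~(i), the first step is to show $\mend \le H$. A Move-up at a node $v$ is triggered precisely when $\mathtt{check}(v) \neq \mathtt{main}(anc(v))$, so at least one of these two oracle responses must be incorrect; we charge the Move-up to such an erroneous response. By preferring to charge to the fresh $\mathtt{check}(v)$ of the current iteration when it is wrong, and otherwise falling back on the $\mathtt{main}(anc(v))$ that caused the disagreement, one obtains an injection from Move-ups into the at most $H$ incorrect responses. Once $\mend \le H$ is available, the counting step is routine: the algorithm issues $n$ queries in total and every iteration consumes at most two, so the number of iterations is at least $\lfloor n/2 \rfloor$, and since iterations $= M_d + \mend$, we get $M_d \ge \lfloor n/2 \rfloor - \mend \ge \lfloor n/2 \rfloor - H$. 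Because $\operatorname{depth}(u) = M_d - \mend$ and $a_u$ is the $(H-\mend)$-th ancestor of $u$,
\[
\operatorname{depth}(a_u) \;=\; \operatorname{depth}(u) - (H-\mend) \;=\; M_d - H \;\ge\; \lfloor n/2 \rfloor - 2H.
\]

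For part~(ii), let $p$ be the lowest common ancestor in $T$ of $u$ and the leaf that contains $L_x$. Since both $p$ and $a_u$ are ancestors of $u$, proving $L_x \in T_{a_u}$ amounts to showing $\operatorname{depth}(p) \ge \operatorname{depth}(a_u)$, equivalently $\operatorname{depth}(u) - \operatorname{depth}(p) \le H - \mend$. Every edge on the path from $p$ to $u$ lies in the sibling subtree of $p$ that does \emph{not} contain the leaf of $L_x$, so the walk must have traversed each such edge downward at least once. By Lemma~\ref{lemma:helper.claim}, every such downward traversal is a Move-down at some $v$ with $L_x \notin T_{ch(v)}$ and therefore consumes at least one wrong response among the queries $\mathtt{main}(v), \mathtt{check}(v)$ of its iteration; these iterations are distinct from those carrying the Move-up charges, so at most $H - \mend$ such ``straying'' Move-downs can occur. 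Hence $\operatorname{depth}(u) - \operatorname{depth}(p) \le H - \mend$, as desired.

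\emph{Main obstacle.} The delicate point is to make the charging truly injective across \emph{both} Move-ups and straying Move-downs. A wrong $\mathtt{main}(w)$ response could in principle be blamed at once for a straying Move-down at $w$ \emph{and} for a later Move-up at some descendant of $w$ whose $\mathtt{check}$ query is correct but disagrees with the outdated wrong $\mathtt{main}(w)$. A careful accounting -- always charging a Move-up first to its own fresh $\mathtt{check}$ response when that one is erroneous, and using an invariant that after a Move-up above $v$ the algorithm cannot recreate the same $(v, anc(v))$ configuration without spending additional fresh queries -- is what makes the bound $H - \mend$ on straying Move-downs actually hold.
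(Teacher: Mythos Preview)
Your argument is essentially the paper's own: part~(i) by counting iterations (at least $\lfloor n/2\rfloor$, since each uses at most two queries) together with $\mend\le H$, and part~(ii) by charging the straying Move-downs (via Lemma~\ref{lemma:helper.claim}) and the Move-ups to erroneous responses so as to contradict the bound~$H$. The only differences are cosmetic: you phrase~(ii) through the lowest common ancestor of $u$ and $L_x$ rather than by assuming $L_x\notin T_{a_u}$ directly, and you are more explicit than the paper about the injectivity of the charging (your ``Main obstacle''), which the paper handles tersely by splitting each Move-up at $v$ into the cases $L_x\in T_v$ versus $L_x\notin T_v$.
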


\begin{proof}%[Proof of Lemma~\ref{lemma:twoclaims}]
	To prove (i), note that since there are $n$ queries, and each iteration invokes up to two queries, the number of iterations is at least $\lfloor n/2 \rfloor$. Among these iterations, \mend 
	%$i$ \spyros{we should never use $i,j$ for specific indices} 
	of them are Move-up iterations and the remaining $\lfloor n/2 \rfloor-\mend$ are Move-down iterations. %, where $i$ is the content of \texttt{mb-counter}. 
	Therefore, the search ends at a node $u$ of depth $\lfloor n/2 \rfloor - 2\mend$. Given that $a_u$ is the $H-\mend$'th ancestor of $u$, its depth is at least $\lfloor n/2 \rfloor - 2\mend - (H-\mend) \leq \lfloor n/2 \rfloor - 2H $. The last inequality holds because $\mend\leq H$. %\spyros{We need to fix $i$ accordingly above} \\ 
	
	We prove (ii) by way of contradiction. Suppose that $L_x$ is not in $T_{a_u}$. 
	Recall that $a_u$ is the $(H-\mend)$'th ancestor of node $u$. Therefore, the algorithm must have made at least $H-\mend+1$ Move-down operations, in a subtree that does not contain $L_x$. 
	From Lemma~\ref{lemma:helper.claim}, any of these operations include at least one incorrect response to their main or checkup query, resulting in at least $H-\mend+1$ incorrect responses for iterations with Move-down operations on the search path from $a_u$ to $u$. In addition, each Move-up query is associated with a wrong response. To see that, suppose there is a Move-up query at node $v$. If $L_x$ is not in $T_v$, then the main query at the parent of $v$ has been incorrectly answered. Otherwise, if $L_x$ is in $T_v$, then $\texttt{check(v)}$ is incorrectly responded (Move-up operation implies $\texttt{check(v)} \neq \texttt{main(anc(v))}$ and $\texttt{main(anc(v))}$ is correctly answered because $L_x \in T_v$). We conclude that, in addition to the $H-\mend+1$ incorrect responses for iterations with Move-down operations, there are $\mend$ incorrect responses associated with the Move-up queries. Therefore, the total number of wrong responses to queries must be at least $H+1$, contradicting the fact that the number of wrong responses is at most $H$. %\spyros{We need to fix the $i$ above in all.} 
\end{proof}

\begin{theorem}\label{th:main:binary} 
	For every $H\leq n/4$, \RBIS has competitive ratio at most  $(M/m)^{2^{2H-n/2}}$.
\end{theorem}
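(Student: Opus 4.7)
The plan is to combine the two parts of Lemma~\ref{lemma:twoclaims} to pin down both where the best price $p^*$ lies and how close \RBIS's reservation price is to it. Let $L_l, L_{l+1}, \ldots, L_{l'}$ denote the leaves of the subtree $T_{a_u}$, indexed in increasing order of the corresponding intervals. By Lemma~\ref{lemma:twoclaims}(ii), the leaf $L_x$ containing $p^*$ is one of these leaves, so $l \leq x \leq l'$ and in particular $p^* > a_{l-1}$. Since \RBIS sets its reservation price to $a_{l-1}$, the price it accepts is at least $a_{l-1}$, while $p^* \leq a_{l'}$. Thus the competitive ratio is bounded by $a_{l'}/a_{l-1}$, and I just need to bound the number of leaves $l' - l + 1$ in $T_{a_u}$.

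Here is where Lemma~\ref{lemma:twoclaims}(i) enters: the depth of $a_u$ in $T$ is at least $\lfloor n/2 \rfloor - 2H$, so the height of $T_{a_u}$ is at most $\lceil n/2 \rceil + 2H$, and consequently $T_{a_u}$ has at most $2^{\lceil n/2 \rceil + 2H}$ leaves. The hypothesis $H \leq n/4$ is precisely what is needed to ensure this depth is nonnegative (so the $(H-\mend)$-th ancestor $a_u$ is well-defined in $T$) and that the subtree bound is nontrivial.

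Putting these ingredients together and using the fact that consecutive $a_i$'s form a geometric sequence with ratio $\rho = (M/m)^{1/2^n}$,
\[
\comp(\RBIS) \;\leq\; \frac{a_{l'}}{a_{l-1}} \;=\; \rho^{\,l' - l + 1} \;\leq\; \rho^{\,2^{\lceil n/2 \rceil + 2H}} \;=\; (M/m)^{2^{\lceil n/2 \rceil + 2H}/2^{n}} \;=\; (M/m)^{2^{2H - n/2}},
\]
which is the stated bound (the final equality taking $n$ even; for $n$ odd one obtains the analogous $(M/m)^{2^{2H-(n-1)/2}}$).

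The main conceptual work has already been discharged in Lemma~\ref{lemma:twoclaims}, so the theorem reduces to an arithmetic step. The only thing to be careful about is matching the reservation-price convention of $a_{l-1}$ used in the algorithm description with the geometric telescoping above, and verifying that the exponent arithmetic $\lceil n/2 \rceil + 2H - n = 2H - n/2$ (equivalently $2H - \lfloor n/2 \rfloor$) is what is claimed; this is where the condition $H \leq n/4$ is invoked once more to guarantee that the exponent $2H - n/2$ is $\leq 0$, so the bound on the competitive ratio is indeed at most $M/m$ and nontrivial. No further lemmas are needed.
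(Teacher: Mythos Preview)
Your proof is correct and follows essentially the same approach as the paper's: both invoke Lemma~\ref{lemma:twoclaims}(ii) to guarantee that $p^*$ lies among the leaves of $T_{a_u}$ (so the reservation price $a_{l-1}$ is attained and $p^*\le a_{l'}$), and then use Lemma~\ref{lemma:twoclaims}(i) to bound the number of leaves of $T_{a_u}$ by $2^{n-(\lfloor n/2\rfloor-2H)}$, yielding the same telescoped exponent $2H-n/2$. Your handling of the $\lfloor\cdot\rfloor/\lceil\cdot\rceil$ discrepancy for odd $n$ is in fact a bit more careful than the paper, which silently writes $n/2$ throughout.
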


\begin{proof}
	Let $L_l = [a_{l-1},a_l)$ and $L_r = [a_{r-1},a_r)$ denote the leftmost and rightmost leaves in the subtree rooted at $a_u$. Recall that the algorithm selects $a_{l-1}$ as its reservation price, while Lemma~\ref{lemma:twoclaims} guarantee ensures that $L_x$, and thus $p^*$ is located in the subtree rooted at $a_u$, that is, $p^* < a_r$. Therefore, the competitive ratio of \RBIS is at 
	most $a_r/a_{l-1} = \rho^{r-l+1}$. Moreover, by Lemma~\ref{lemma:twoclaims}, since $a_u$ is at depth at least $d = \lfloor n/2 \rfloor - 2H$ of $T$, the number of leaves in the subtree rooted at $a_u$ is at least $2^{n-d} < 2^{n/2+2H}$, and thus $\comp(\RBIS) \leq \rho^{2^{n-d}} 
	< (M/m)^{2^{n/2 + 2H}/2^n} = (M/m)^{2^{2H-n/2}}$.
\end{proof}

\paragraph{Lower bounds}
We can complement Theorem~\ref{th:main:binary} with the following impossibility result, assuming  comparison-based queries over a binary search tree.

\newcommand{\dual}{\textsc{Dual}\textsc{Search}($m$)\xspace}
\begin{theorem}\label{th:lowerBinary}%[Appendix]
	The competitive ratio of any online search algorithm with $n$ comparison-based queries over a binary tree
	is at least $(M/m)^{2^{2H-n}}$, assuming $n\geq 11$.
\end{theorem}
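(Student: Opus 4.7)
The plan is to combine a reservation-price reduction with a R\'enyi--Ulam / Berlekamp-style adversary for searching with lies.

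\emph{Step 1: a reservation-price lemma.} I would first prove: if after the $n$ queries two leaves $L_x, L_y$ of the interval tree with $x<y$ are \emph{both} consistent with the responses (in the sense that, along the algorithm's execution path, the hypothesis ``$p^*\in L_x$'' and the hypothesis ``$p^*\in L_y$'' are each contradicted by at most $H$ of the adversary's responses), then no matter which reservation price $a_t$ the algorithm commits to, the competitive ratio is at least $\rho^{\min(x,\,y-x)}$, where $\rho=(M/m)^{1/2^n}$. The argument is a case split on $t$: under hypothesis $L_y$ the adversary presents the sequence $a_t, a_y, a_y, \ldots$, forcing acceptance of $a_t$ and ratio $\rho^{y-t}$ (valid when $t\le y$); under hypothesis $L_x$ the adversary presents $a_x, m, \ldots, m$, forcing acceptance of $m$ and ratio $\rho^{x}$ (valid when $t>x$). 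Balancing these two forcing options across the algorithm's choice of $t$ yields a ratio at least $\rho^{\min(x,\,y-x)}$.

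\emph{Step 2: a Berlekamp adversary.} Next I would fix the target set $T=\{L_{i\cdot 2^{2H}} : 1\le i\le 2^{n-2H}\}$, evenly spaced by $2^{2H}$ with leftmost at position $2^{2H}$, and run the classical R\'enyi--Ulam--Berlekamp adversary restricted to $T$. Writing $e_i(x)$ for the number of lies accumulated by target $x\in T$ after $i$ queries, $S_i=\{x\in T : e_i(x)\le H\}$ for the live set, and $\binom{k}{\le j}:=\sum_{\ell=0}^{j}\binom{k}{\ell}$, define the potential
\[
\Phi_i \;=\; \sum_{x\in S_i}\binom{n-i}{\le H-e_i(x)}.
\]
Pascal's identity for partial sums, $\binom{n-i-1}{\le k}+\binom{n-i-1}{\le k-1}=\binom{n-i}{\le k}$, implies $\Phi_{i+1}^{\mathrm{Yes}}+\Phi_{i+1}^{\mathrm{No}}=\Phi_i$ for any threshold query, so the adversary can always pick the response giving $\Phi_{i+1}\ge \lceil\Phi_i/2\rceil$. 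Iterating, $\Phi_n\ge \lceil \Phi_0/2^n\rceil = \lceil \binom{n}{\le H}/2^{2H}\rceil$, and since $\binom{0}{\le j}=1$ for all $j\ge 0$, this is exactly $|S_n|$.

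\emph{Step 3: putting it together.} The hypothesis $n\ge 11$ is then used to verify by a finite numerical computation on binomial coefficients that $\binom{n}{\le H}>2^{2H}$ for every $H\ge 1$ in the interesting range, i.e.\ the values of $H$ for which the claimed bound $(M/m)^{2^{2H-n}}$ does not already exceed the no-prediction lower bound $\sqrt{M/m}$ of El-Yaniv et al.\ \cite{el2001optimal}. In that range $|S_n|\ge 2$, so two targets $L_x,L_y\in T$ survive with $x\ge 2^{2H}$ and $y-x\ge 2^{2H}$, and Step~1 immediately yields competitive ratio at least $\rho^{2^{2H}}=(M/m)^{2^{2H-n}}$. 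For $H$ outside the interesting range the claimed bound reduces to $\sqrt{M/m}$ and follows from the classical lower bound; the boundary case $H=0$ is handled by a one-leaf variant of Step~1 applied at the boundary $a_{x-1}$ of the unique surviving leaf. The main obstacle is the potential-function bookkeeping of Step~2 (tracking $e_i(x)$ under both possible responses, including the integer-ceiling refinement of the halving recurrence) together with the combinatorial verification that the threshold $11$ really suffices across all relevant $H$.
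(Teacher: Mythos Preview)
Your approach is correct in outline and takes a genuinely different route from the paper. The paper proves the bound by a black-box reduction to exact identification under $H$ lies: it partitions $[m,M]$ into $2^{n-2H}$ intervals of common ratio $\rho=(M/m)^{2^{2H-n}}$, observes that any algorithm beating competitive ratio $\rho$ would pin down the interval containing $p^*$ exactly from its day-one reservation threshold, and then invokes the $m+2H$ query lower bound of Disser and Kratsch for this search problem. You instead unpack that lower bound, running the Berlekamp/R\'enyi--Ulam weight adversary directly on a spaced target set and coupling it with your reservation-price lemma. The paper's version is a one-paragraph reduction that hides all the combinatorics inside a citation; your version is self-contained and, as a bonus, makes visible what the hypothesis $n\ge 11$ is actually buying (namely the inequality $\binom{n}{\le H}>2^{2H}$), something the paper's proof leaves opaque.

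Two small cautions on your write-up. First, ``finite numerical computation'' does not literally suffice for Step~3: the pairs $(n,H)$ with $n\ge 11$ and $H$ in the interesting range are infinitely many. You need a short monotonicity argument: for fixed $H$, $\binom{n}{\le H}$ is increasing in $n$, so it is enough to handle the boundary $n=2H+2$, where $\binom{2H+2}{\le H}=2^{2H+1}-\binom{2H+2}{H+1}$ and the standard central-binomial estimate $\binom{2m}{m}\le 4^m/\sqrt{3m+1}$ gives $\binom{2H+2}{H+1}<2^{2H}$ once $H\ge 5$; the remaining cases $H\le 4$, $n=11$ are then genuinely finite. Second, your sentence that for $H$ outside the interesting range ``the claimed bound reduces to $\sqrt{M/m}$'' is not right---for such $H$ the claimed exponent exceeds $1/2$, so the classical $\sqrt{M/m}$ bound does not cover it. This is really a limitation of the theorem's implicit range of $H$ rather than of your proof, and the paper's reduction has the same constraint (it silently needs $n>2H$ for the search space to be nontrivial).
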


\begin{proof}%[Proof of Theorem~\ref{th:lowerBinary}]
	%%%We should add to the statement that n \geq 11 H
	For the sake of contradiction, suppose there is an algorithm $A$ that achieves a competitive ratio better than $\rho = (M/m)^{2^{2H-n}}$. Consider the following search problem that we call \dual: an adversary selects an integer $x$ so that $1 \leq x \leq 2^m$ , and the goal is to find $x$ using a minimum number of queries, out of which up to $H$ queries are incorrectly answered. ~\cite{DisserK17} proved that one cannot solve \dual using less than $m + 2H$ queries %for $H\leq m/9$ 
	%SK: Don't remove
	(and this holds even if algorithms can receive error-free responses to ``=" queries). 
	
	Let $Q$ be an instance of \dual that asks for $x= x_0$ in a search space of size $m=n-2H$. We show that $A$ can be used to solve \dual. For that, we form an instance $Q'$ of the search problem in which the best value $p^*$ is defined as follows. Partition the interval $[m,M]$ into $2^{n-2H}$ intervals $L_1, L_2, \ldots, L_{2^{n-2H}}$, where $L_i = (a_{i-1},a_{i}]$, $a_0=m$, and $a_{2^n} = M$. We define the $a_i$'s so that $\rho = a_1/a_0 = a_2/a_1 = \ldots = a_{2^{n-2H}}/a_{2^{n-2H-1}}$. Now, let $p^* = a_{x_0}$. In order to solve the instance $Q$ of \dual, we apply $A$ on the instance $Q'$ of the search problem, assuming it is allowed to ask $n$ queries. Let $r_A$ denote the smallest price revealed on day 1 such that $A$ accepts that price on day 1. Suppose $r_A$ is in interval $L_y$ ($y\in [1..2^{n-2H}]$). Then, return $y$ as the answer to the instance $Q$ of \dual. Given the upper bound for $\comp(A) \leq \rho$, it must hold that $x_0 = y$. Otherwise, if $y<x_0$, %(when $A$ locates $p^*$ in an interval before the one that it belongs to), 
	we will have $\comp(A) \geq a_{x_0}/r_A \geq  a_{x_0}/a_{x_0-1} > \rho$, and if $y> x_0$, 
	%(when $A$ locates $p^*$ in an interval before the one that it belongs to), 
	we will have $\comp(A) \geq a_{x_0}/m \geq \rho^2$. 
	
	To summarize, we showed that if an online search algorithm $A$ asks $n$ questions (out of which up to $H$ are answered incorrectly) and has a competitive ratio better than $\rho$, then $A$ can be used to solve an instance of $\dual$ on a search space  $m=n-2H$. This, however, contradicts the result of~\cite{DisserK17} and we can conclude that $A$ cannot achieve a competitive ratio better than $\rho$. 
\end{proof}

\newcommand{\bitcoin}{\texttt{Bitcoin-to-USD}\xspace}
\newcommand{\ether}{\texttt{Ethereum-to-USD}\xspace}
\newcommand{\jpy}{\texttt{Yen-to-CAD}\xspace}
\newcommand{\euro}{\texttt{Euro-to-USD}\xspace}

%\begin{figure}[t]
%	\obliviouschart
%	\caption{Average profit of \ORA as a function of the error $\eta$ and the parameter $r$.}
%	\label{fig:bitcoinoblivious}
%\end{figure}

\section{Experimental evaluation}
\label{sex:experimental}
%In this section we present an experimental evaluation of the performance of our algorithms. 

\subsection{Benchmarks and input generation} \label{subsec:benchmarks}
We evaluate our algorithms on benchmarks generated from real-world currency exchange rates, which are publicly available on several platforms. Specifically, we rely on~\cite{tradingaacademy}. We used two currency exchange rates (\texttt{Bitcoin-to-USD} and \texttt{Ethereum-to-USD}) and two fiat currency exchange rates (\texttt{Euro-to-USD} and \texttt{Yen-to-CAD}). 
In all cases, we collected the closing daily exchange rates for a time horizon starting on January 1st, 2018 and ending on September 1st, 2021, which we use as the daily prices. %Due to space limitations, we report results on the \texttt{Bitcoin-to-USD} Benchmark, and we refer to the Appendix for additional benchmarks.

For each benchmark, 20 instances $I_1, I_2, \ldots, I_{20}$ of the online search problem are generated as follows. We select 20 {\em starting days} from the time horizon so that consecutive starting days are evenly distanced. 
Each starting day and the 199 days that follow it form an instance (of length 200) of the search problem. For each such instance, we select $m$ and $M$ to be respectively the minimum and maximum exchange rates. In all experiments, the reported profits are the average taken over these 20 instances. In particular, we use the average profit of the optimal online algorithm \ONstar (without any prediction) as the baseline for our comparisons. %Recall that \ONstar sets the reservation price to be $\sqrt{mM}$. 
Similarly, the average value of the best prices (over all instances) is reported as an upper bound for attainable profits.

\subsection{Algorithms with Best-Price Prediction}
We test our algorithms using several values of prediction error.  
For the \bitcoin and \ether benchmarks, we select 500 values of negative error equally distanced in $[0,0.5]$, as well as 500 equally-distanced values of positive error in $[0,0.5]$. 
For the \euro and \jpy benchmarks, we select 500 values of negative/positive error in a smaller range in $[0,0.04]$. This smaller range is consistent with the fact that fiat currencies are substantially less volatile than cryptocurrencies. That is, the values of $m$ and $M$ are very close in instances generated from the fiat currencies. This implies that the range of error ($[0,(M-m)/M]$ and $[0,(M-m)/m]$ for negative and positive errors, respectively) is much smaller for fiat currencies. 

For each selected value, say $\eta_0$, and for each instance $I_x$ of the problem, we test our algorithms for prediction error equal to $\eta_0$, that is, the predicted value is generated by applying the error $\eta_0$ on the best price in $I_x$. The average profit of the algorithm over all instances is reported as its average profit for $\eta_0$.   
Choosing $\eta\leq 0.5$ implies that the prediction $p$ is at least half and at most twice the best price. For real data, such as currency exchange prices, this range of error is sufficient to capture all instances. 

 % the range of $\eta$ is larger in our theoretical analysis, the error range in our real-world instances is much tighter. In particular, values of $\eta$ that are larger than $0.5$ are not in the error range for many instances of the problem generated from our benchmarks (recall that the error range is defined as a function of $m$ and $M$ and hence is instance-specific).  

\paragraph{Oblivious algorithms}
For the \bitcoin and \ether benchmarks, we evaluate \ORA with different values of the parameter $r \in \{0.5, 0.75, 1.0, 1.25, 1.5 \}$ and for the \euro and \jpy benchmarks, we set $r\in \{0.96, 0.98, 1.00,1.02, 1.05\}$. Given that the range of error is smaller in the fiat currencies, the reservation price must be closer to the predicted value for $p^*$, that is, $r$ should be closer to 1. 
(recall that $rp$ is the reservation price of an algorithm in this class). Figure~\ref{fig:bitcoinoblivious} illustrates the average profit for instances generated from different benchmarks. %the \texttt{Bitcoin-to-USD} benchmark. 
The findings are consistent with Theorem~\ref{lem:up}. Specifically,
for positive error, for all reported values of $r$, \ORA degrades with $\eta$ (consistently with the linear increase in the competitive ratio in Theorem~\ref{lem:up}). For small values of negative error, the average profit increases by $\eta$, followed by a ``drop" when $\eta$ takes a certain larger value (e.g., when $\eta$ becomes $0.251$ for the algorithm with $r=0.75$). This follows precisely Theorem~\ref{lem:up}, as illustrated in Figure~\ref{fig:crChart}. For larger values of negative error, the algorithms gain a fixed profit (e.g., 15890 in the \bitcoin benchmark), which is the average value of the last-day price. For these values of error, the algorithm sets a reservation price that is too large, and results in the player accepting the last-day price. Last, we note that, as predicted by our competitive analysis, no algorithm dominates another in terms of collected profit. %For example, the algorithm with $r=0.5$ has a better profit than the algorithm with $r=0.75$ when the error is negative and relatively larger and has a worse profit for other values of $\eta$.

\begin{figure}[!t]
	\centering
	\begin{subfigure}[b]{.49\columnwidth}
		%		\obliviouschart
		\includegraphics[width = \columnwidth]{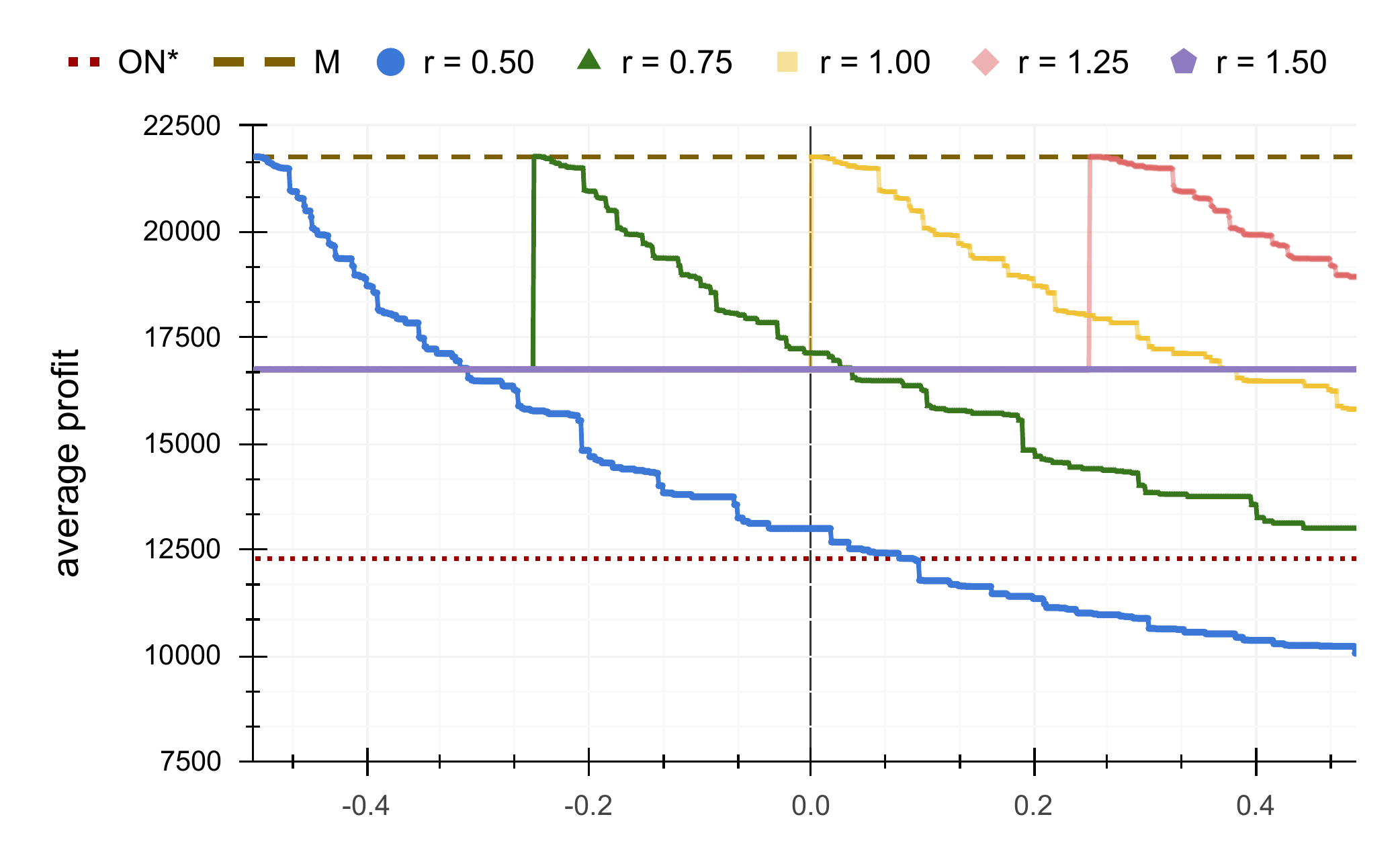}
		\caption{average profit for the \bitcoin benchmark}
	\end{subfigure} \hfill
	\begin{subfigure}[b]{.49\columnwidth}
		\includegraphics[width = \columnwidth]{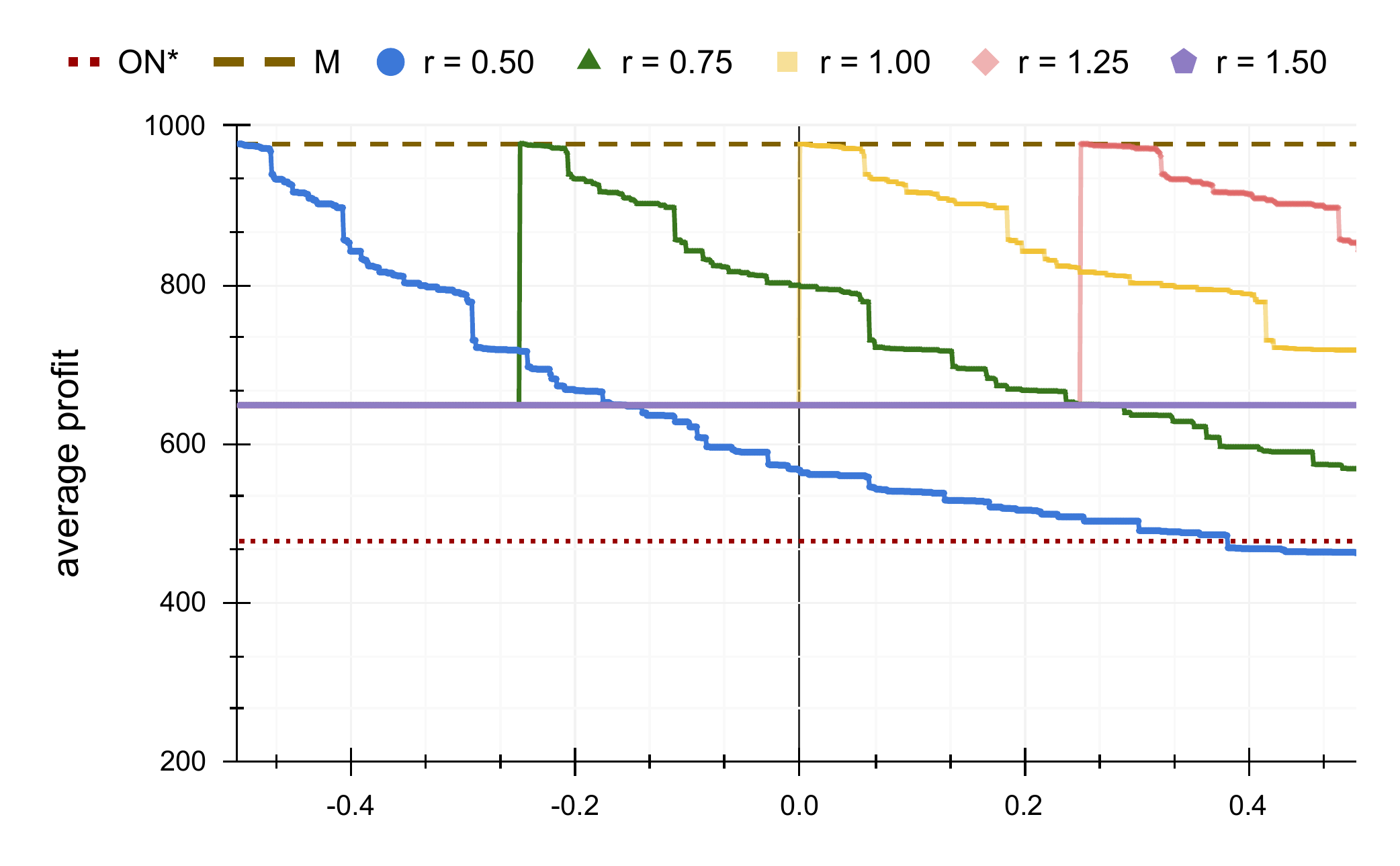}
		\caption{average profit for the \ether benchmark}
	\end{subfigure}
	\hfill
	\begin{subfigure}[b]{.49\columnwidth}
		\centering
		\includegraphics[width = \columnwidth]{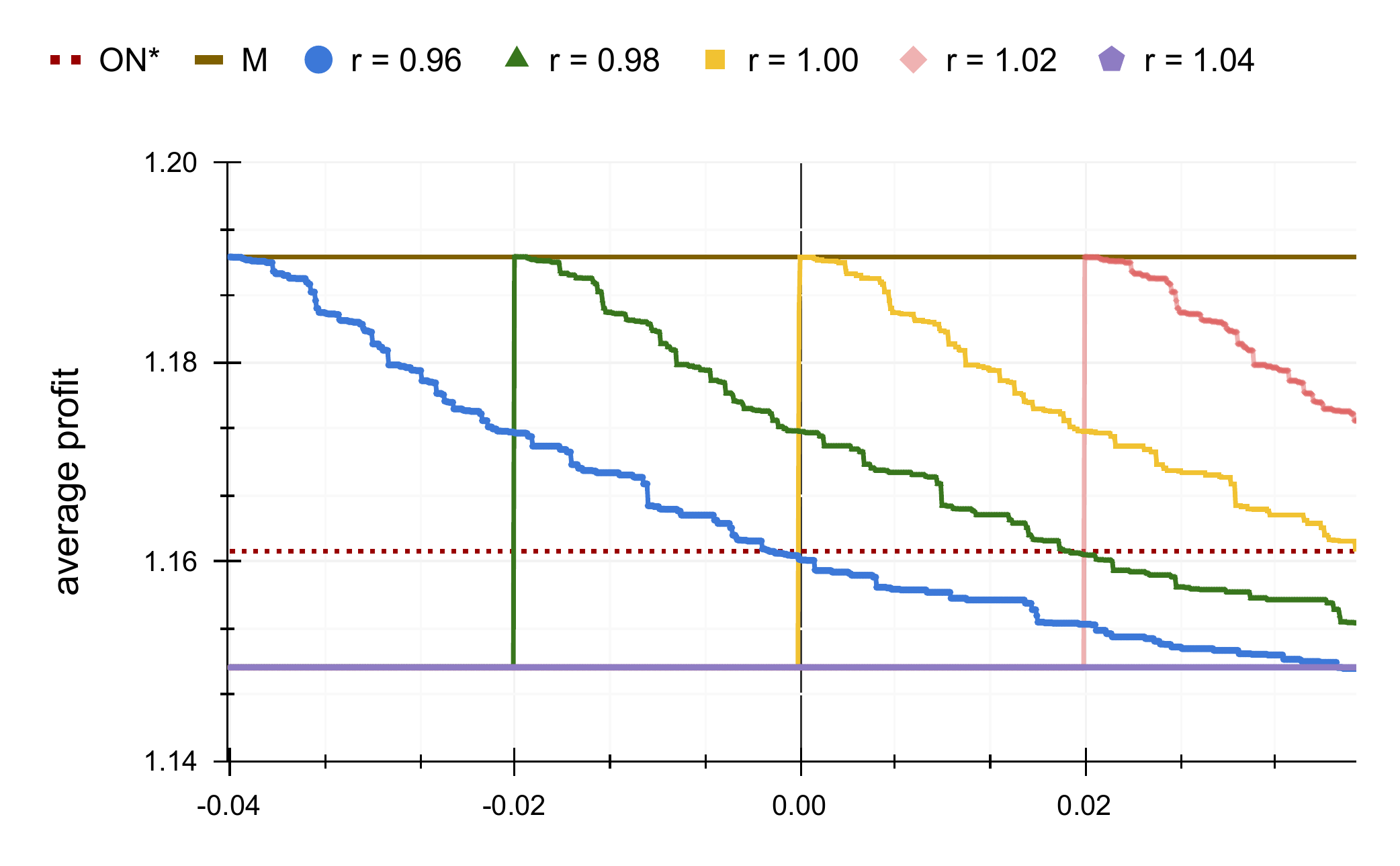}
		\caption{average profit for the \euro benchmark}
	\end{subfigure}
	\hfill
	\begin{subfigure}[b]{.49\columnwidth}
		\centering
		\includegraphics[width = \columnwidth]{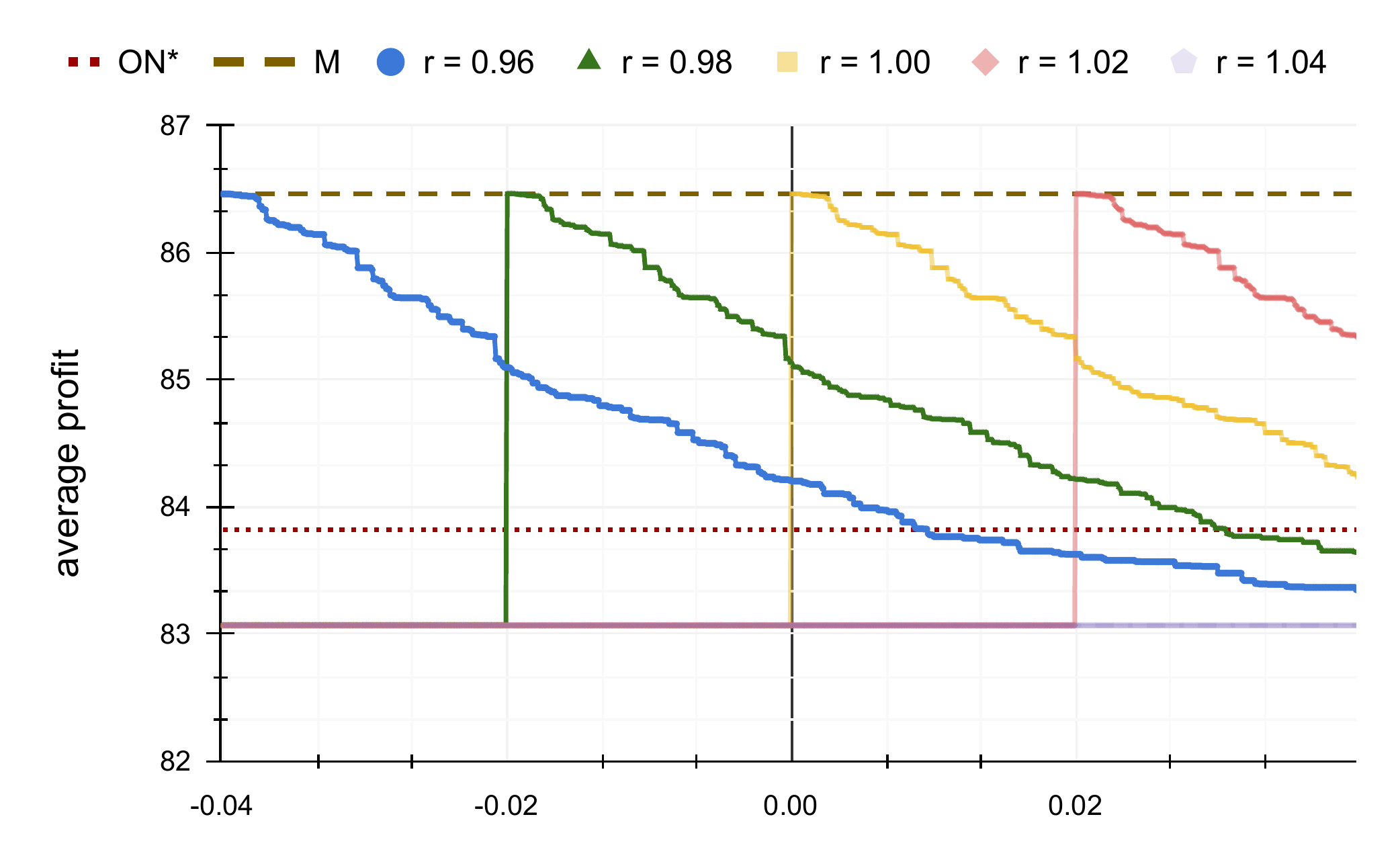}
		\caption{average profit for the \jpy benchmark}
	\end{subfigure}
	\caption{The average profit of \ORA with different values of $r$ over instances generated from different benchmarks.}
	\label{fig:bitcoinoblivious}
\end{figure}

The results demonstrate that predictions about best price lead to profit gains even for oblivious algorithms. %In particular, for all values of error in the reported range (for all $\eta <0.5$), algorithms with $r\in \{0.75, 1.00, 1.25, 1.50\}$ result in better profit when compared to \ONstar.
In particular, all algorithms result in better profit when compared to \ONstar, as long as $\eta <0.5$ (for the \bitcoin benchmark), $\eta < 0.2$ (for the \ether benchmark) and $\eta <0.02$ (for the \euro and \jpy benchmarks).

\paragraph{Non-oblivious algorithms}
We tested \robust with upper bound $H$ on both the positive and negative error. For the \bitcoin and \ether benchmarks, we set $H = H_n = H_p$ for $H \in \{0.1, 0.2, 0.3,  0.4, 0.5 \}$. For the \euro and \jpy benchmarks, we set  $H = H_n = H_p$ for $H \in \{0.005, 0.01, 0.02,  0.03, 0.04 \}$. The smaller range of $\eta$ in fiat currencies implies that we need to test smaller values of $H$.

For each such value of $H$, and for each selected error $\eta$,  we report the average profit over the 20 instances from 
%the \texttt{Bitcoin-to-USD} benchmark
different benchmarks. %(Figure~\ref{fig:bitcoinnonoblivious}). 
Since the setting is non-oblivious, we only report profits for $\eta \leq H$.
Figure~\ref{fig:ApRobuts} illustrates the average profit for instances generated from different benchmarks. The results are consistent across all benchmarks. We observe that all algorithms improve as the negative error increases and they degrade as the positive error increases. This is consistent with Theorem~\ref{th:haware}. Algorithms with smaller $H$ have an advantage over those with larger $H$, again consistently with Theorem~\ref{th:haware}. These results demonstrate that non-oblivious algorithms can benefit from best-price predictions in all benchmarks. 

%\begin{figure}[t]
%	\awarechart
%	\caption{Average profit of \robust as a function of the error $\eta$ and the bound $H$.}
%	\label{fig:bitcoinnonoblivious}
%\end{figure}

\begin{figure}[!t]
	\centering
	\begin{subfigure}[b]{.49\columnwidth}
	\includegraphics[width = \columnwidth]{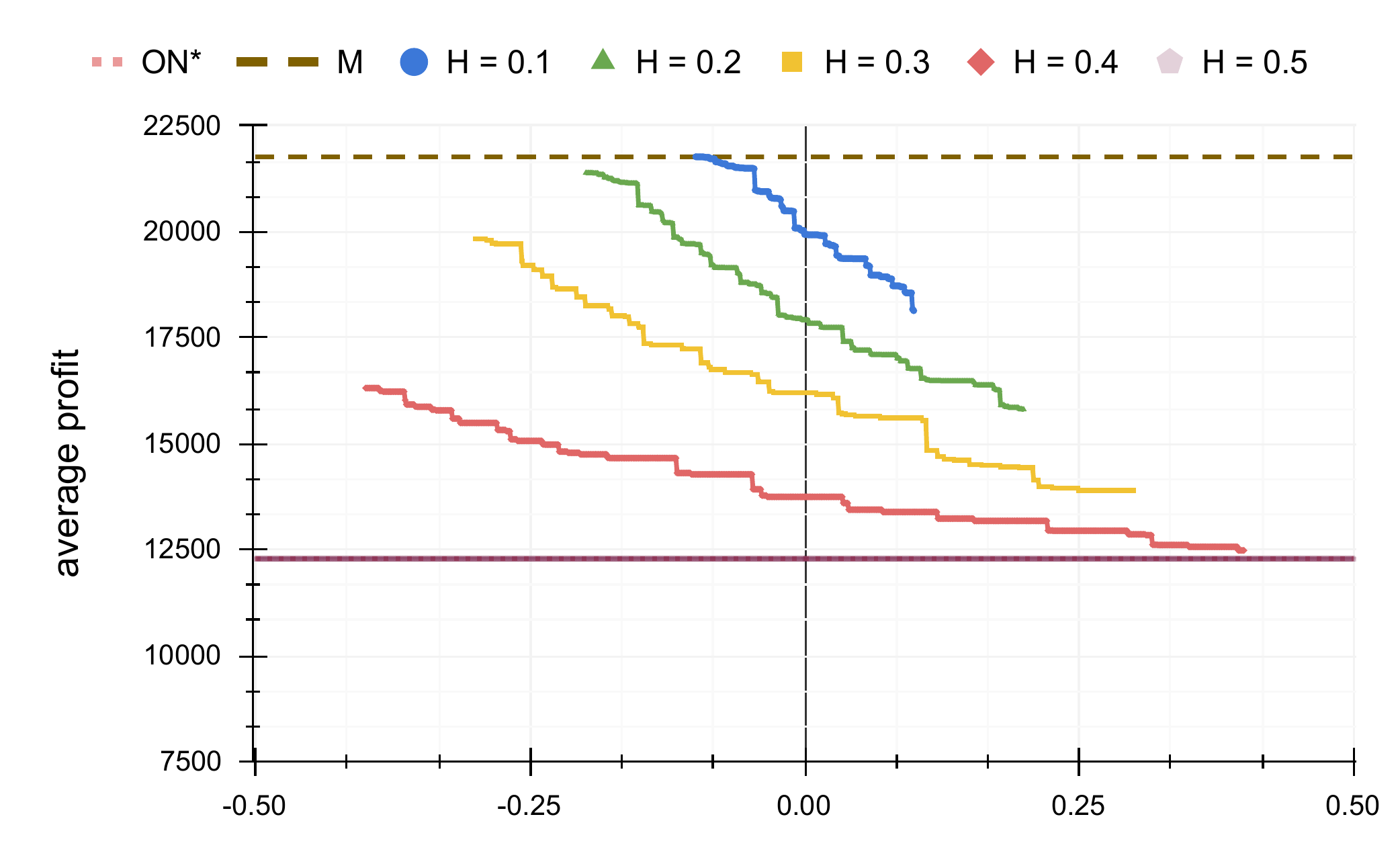}
	\caption{average profit for the \bitcoin benchmark}
	\end{subfigure}
\hfill
	\begin{subfigure}[b]{.49\columnwidth}
		\includegraphics[width = \columnwidth]{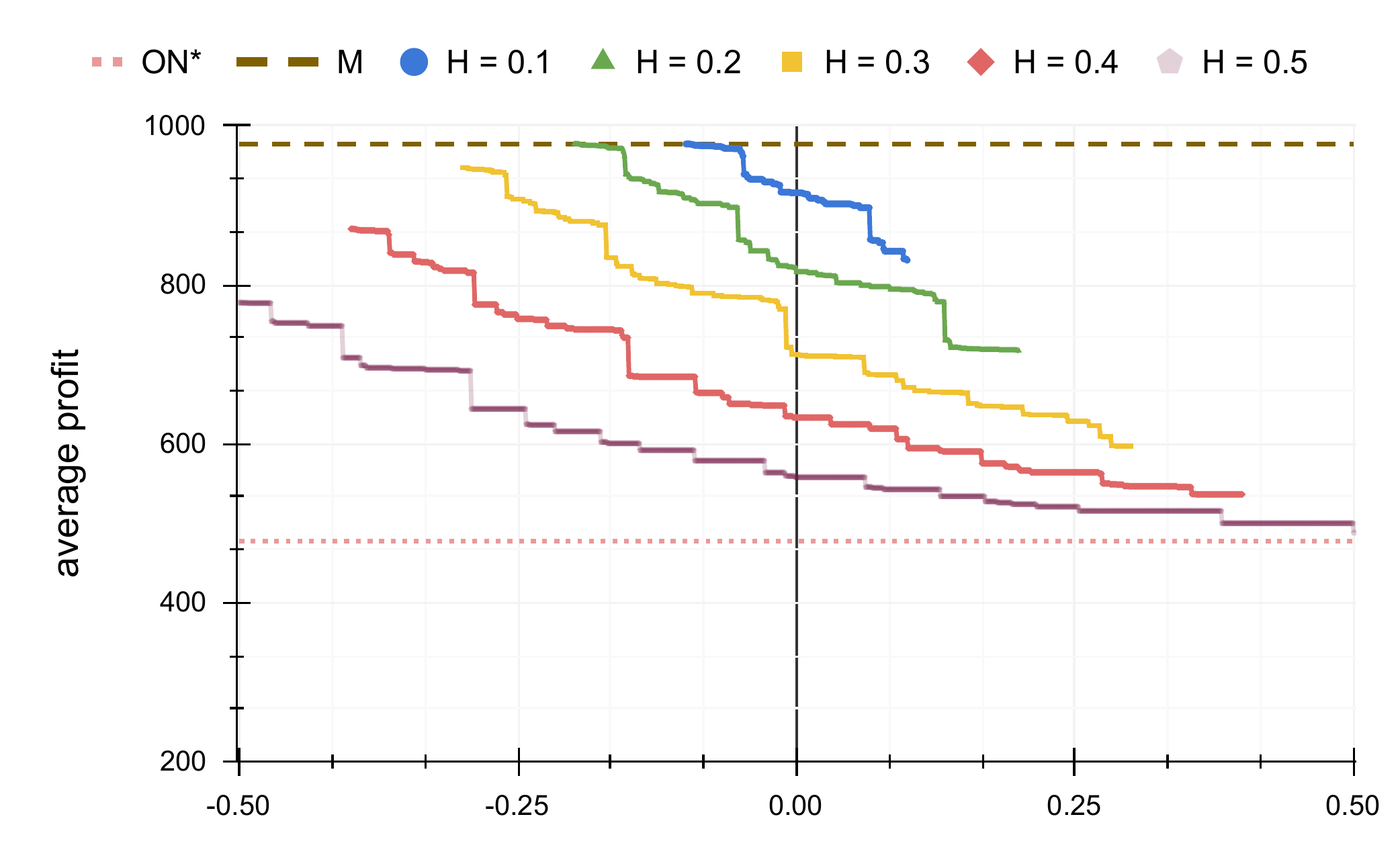}
		\caption{average profit for the \ether benchmark}
	\end{subfigure}
	\hfill
	\begin{subfigure}[b]{.49\columnwidth}
		\centering
		\includegraphics[width = \columnwidth]{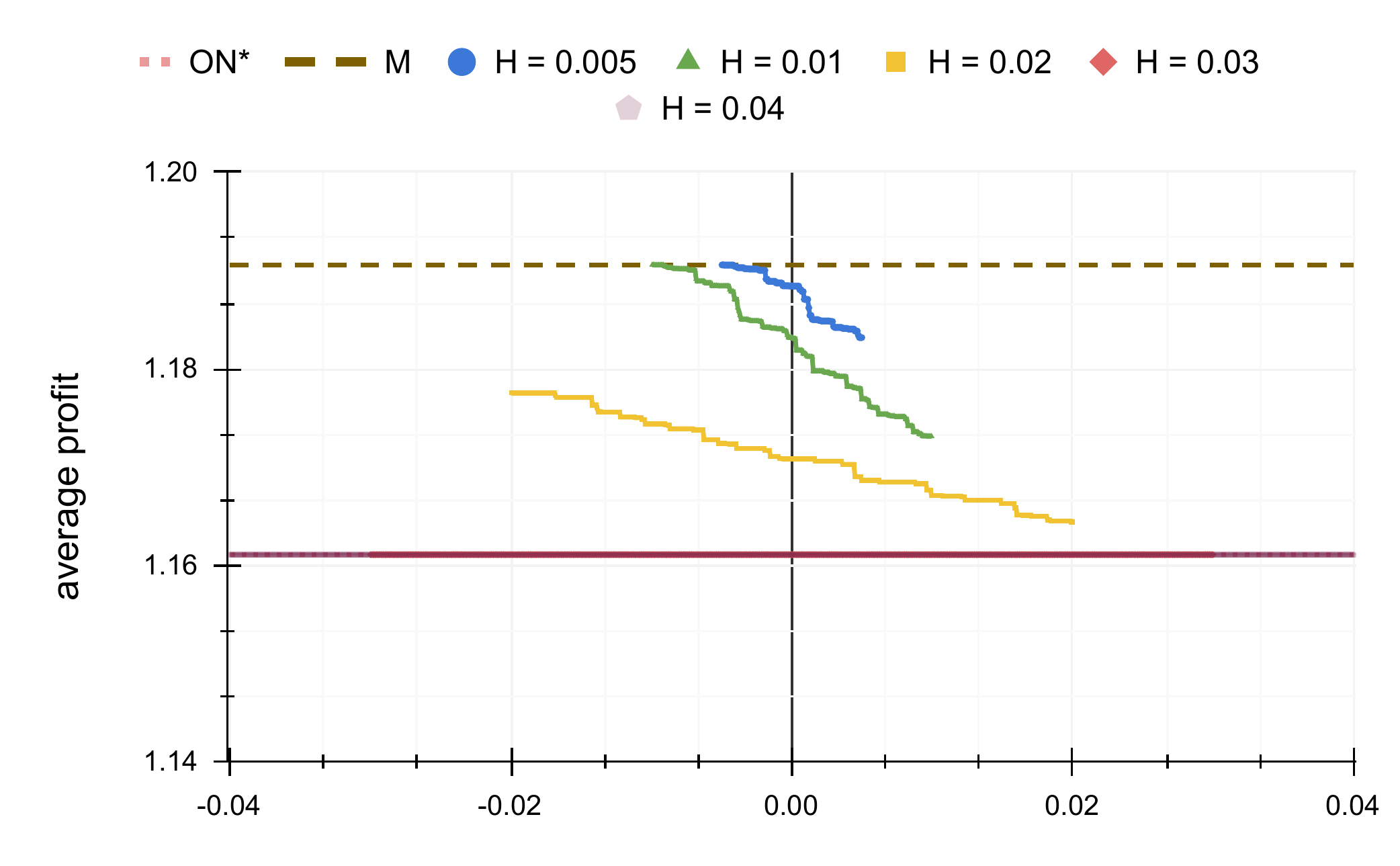}
		\caption{average profit for the \euro benchmark}
	\end{subfigure}
	\hfill
	\begin{subfigure}[b]{.49\columnwidth}
		\centering
		\includegraphics[width = \columnwidth]{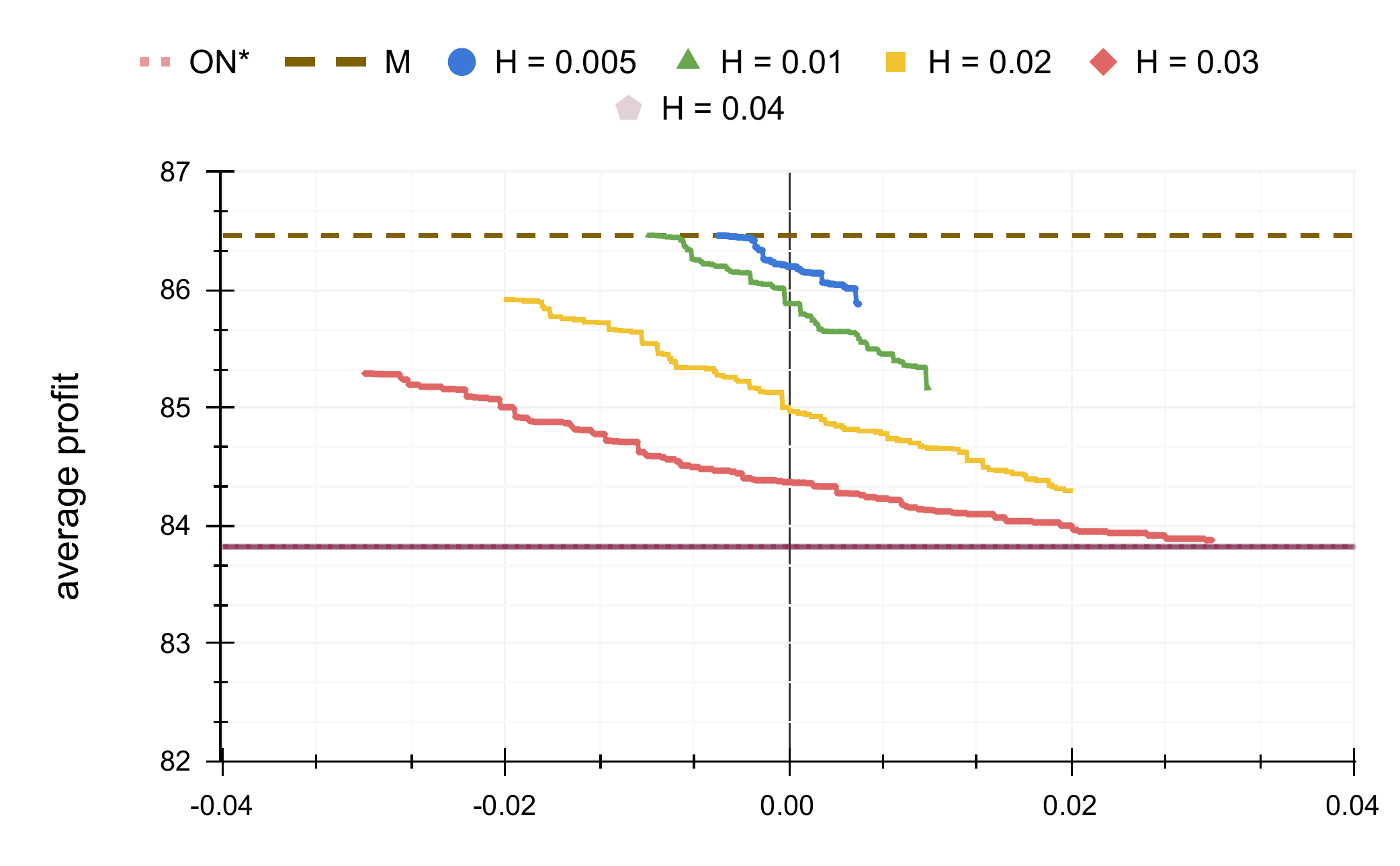}
		\caption{average profit for the \jpy benchmark}
	\end{subfigure}
	\caption{The average profit of \robust with different values of $H$ over instances generated from different benchmarks.}
	\label{fig:ApRobuts}
\end{figure}

%We observe that all algorithms improve as the negative error increases and they degrade as the positive error increases. This is consistent with Theorem~\ref{th:haware}. Algorithms with smaller $H$ have an advantage over those with larger $H$, again consistently with Theorem~\ref{th:haware}. These results demonstrate that non-oblivious algorithms can benefit from best-price predictions. 
%See also the Appendix for additional benchmarks.

% In particular, for all values of error, non-oblivious algorithms are no worse (often better) than \ONstar. In an extreme case, when $H$ takes its largest value ($H=0.5$), the non-oblivious algorithm performs analogously to \ONstar. 

%\begin{figure}[t]
%	\includegraphics*[width = .97\columnwidth, ]{charts/bitcoin/linear0} 	\caption{Average profit of \RLIS as a function of error (we have $\eta \leq H$).}\label{fig:bitcoinlinear0}
%\end{figure}

\begin{figure}[!t]
	\centering
	\begin{subfigure}[b]{.49\columnwidth}
	\includegraphics[width = \columnwidth]{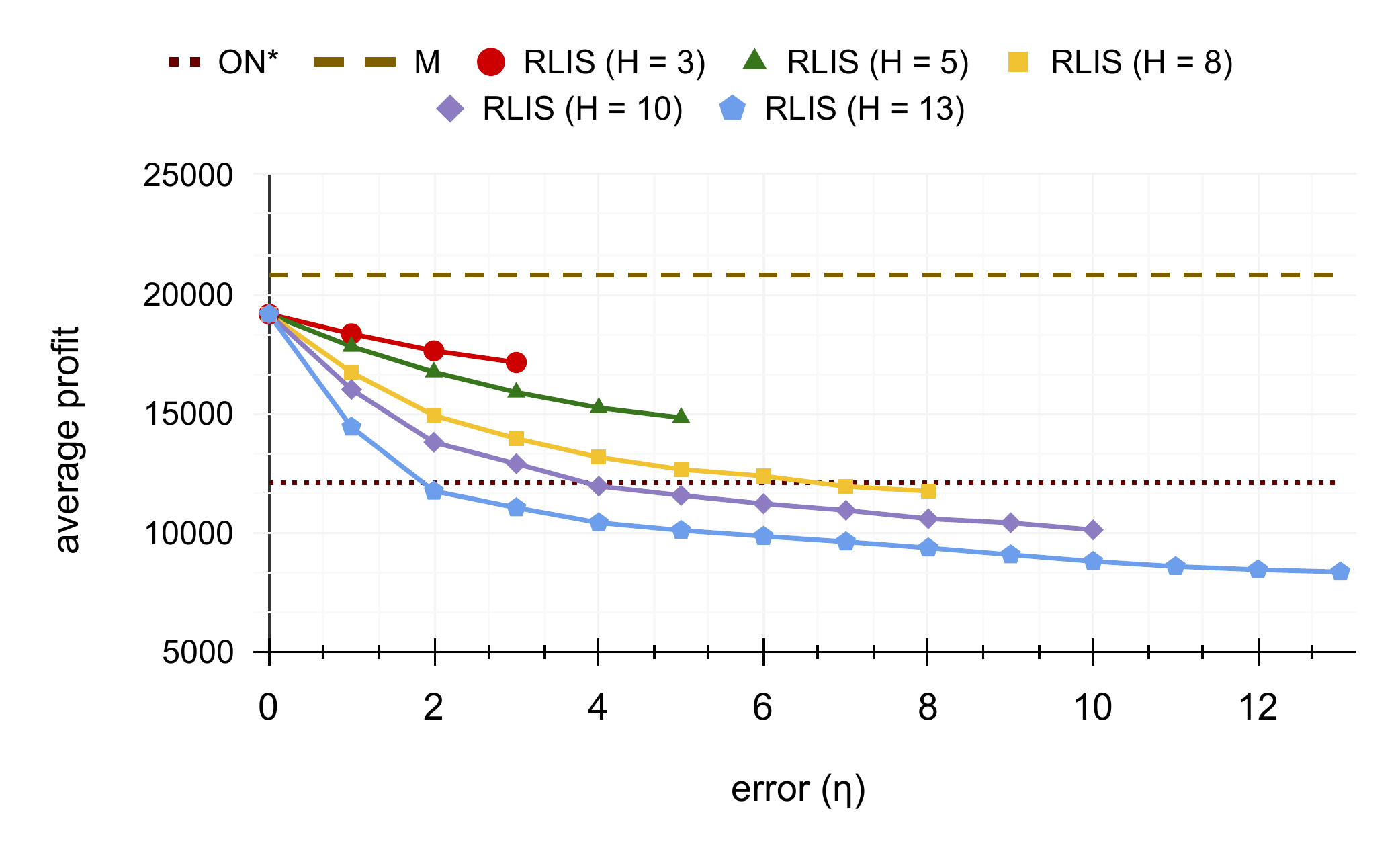}
	\caption{average profit for the \bitcoin benchmark}
	\end{subfigure} \hfill
	\begin{subfigure}[b]{.49\columnwidth}
		\includegraphics[width = \columnwidth]{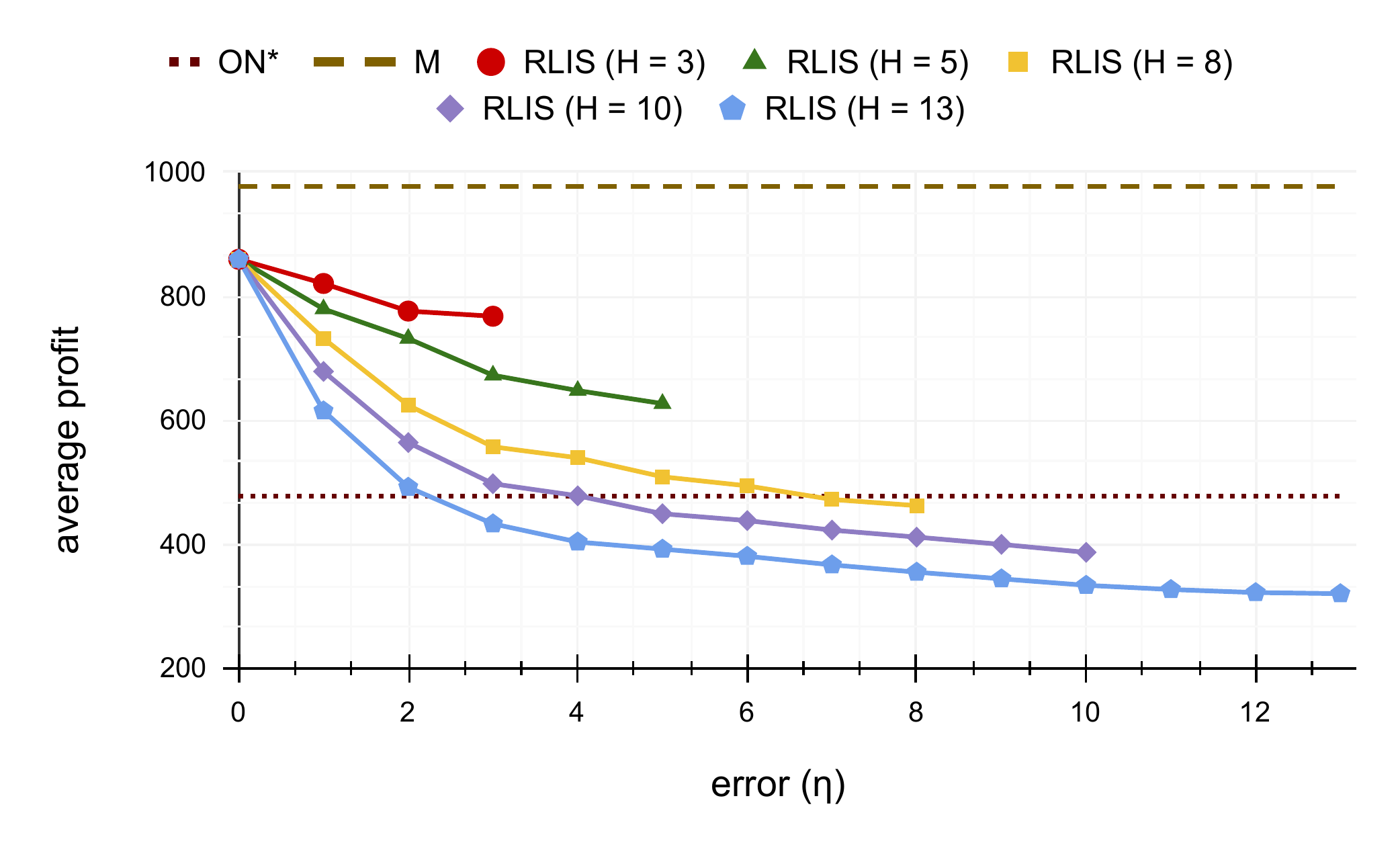}
		\caption{average profit for the \ether benchmark}
	\end{subfigure}
	\hfill
	\begin{subfigure}[b]{.49\columnwidth}
		\centering
		\includegraphics[width = \columnwidth]{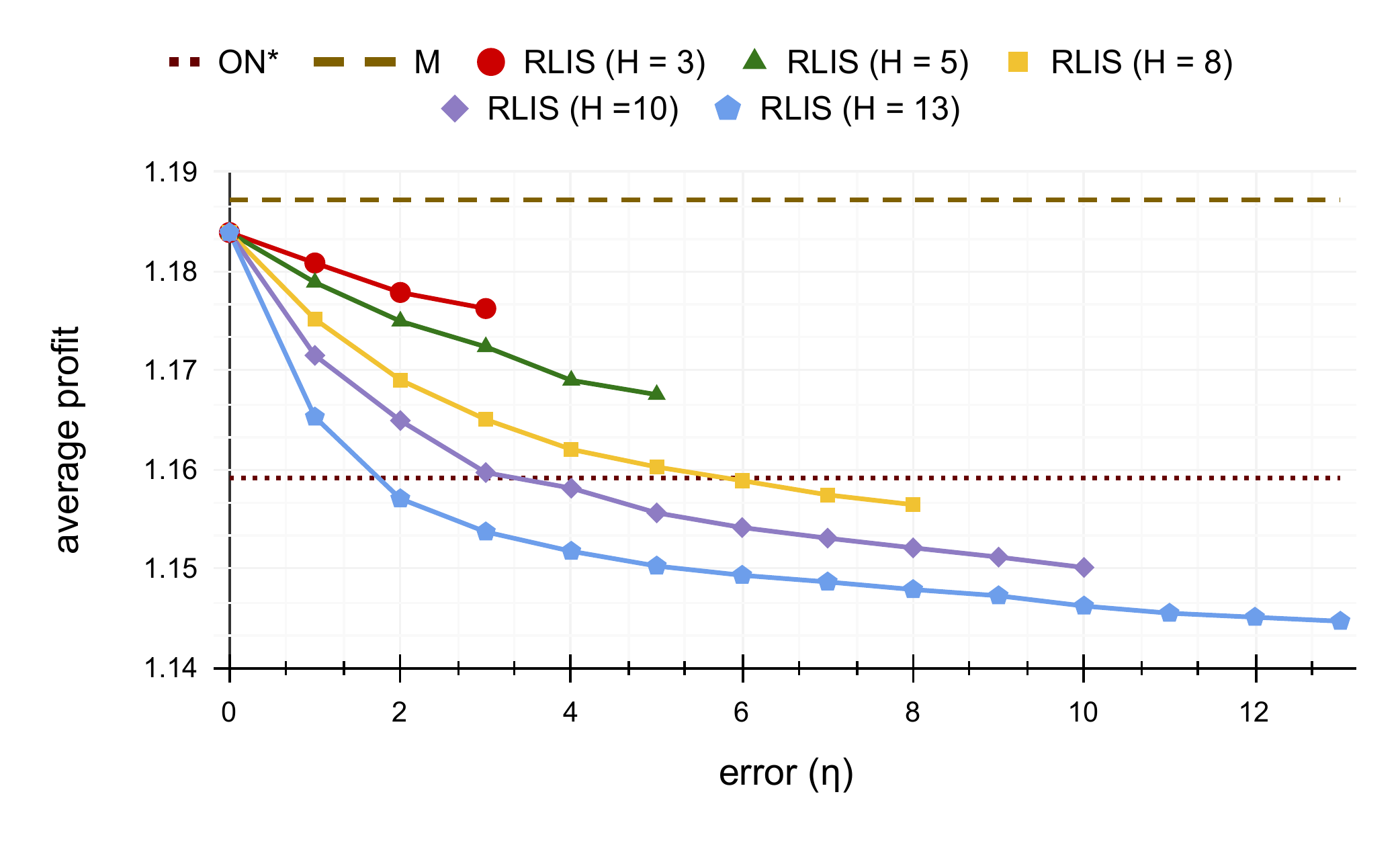}
		\caption{average profit for the \euro benchmark}
	\end{subfigure}
	\hfill
	\begin{subfigure}[b]{.49\columnwidth}
		\centering
		\includegraphics[width = \columnwidth]{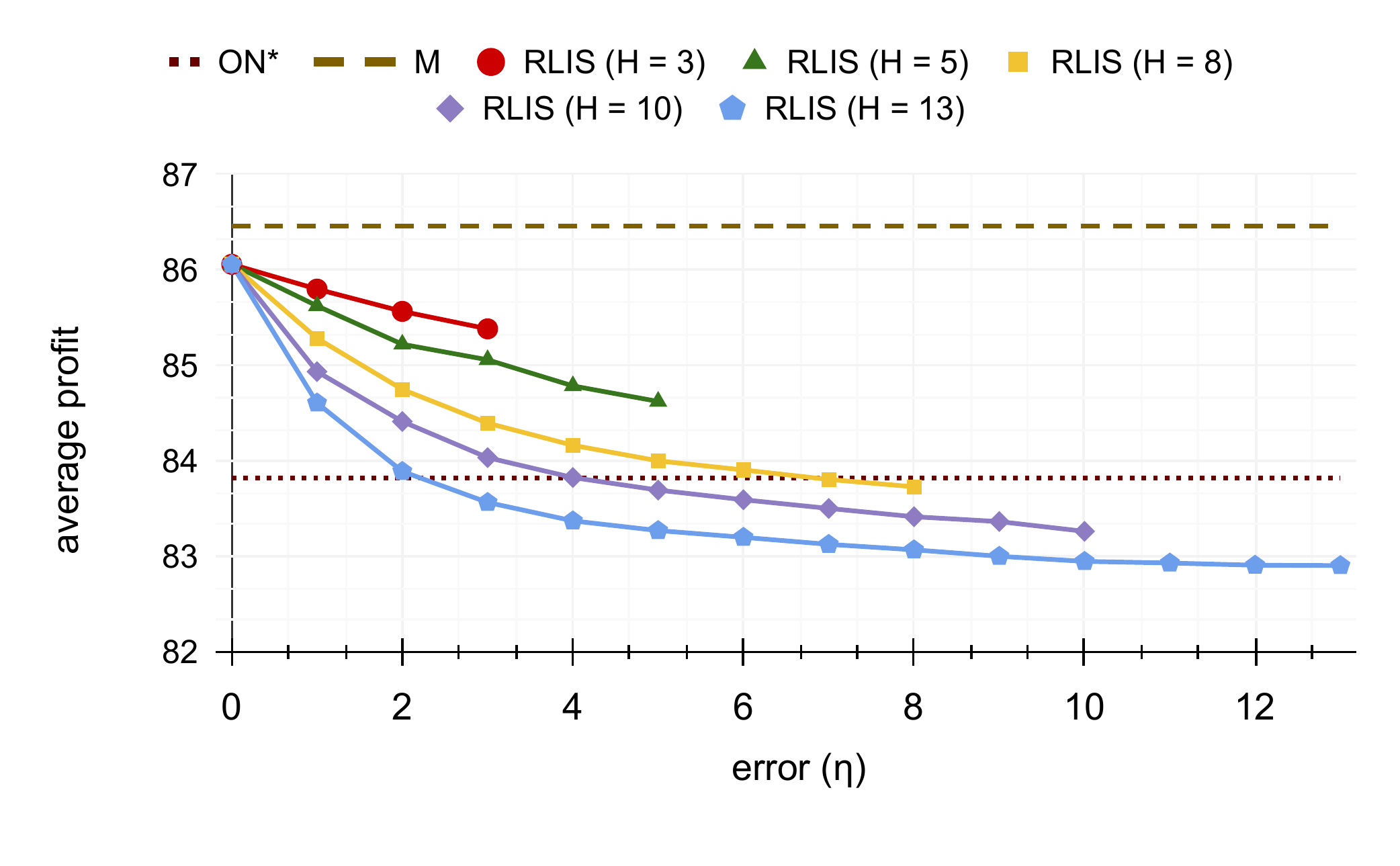}
		\caption{average profit for the \jpy benchmark}
	\end{subfigure}
	\caption{The average profit of \RLIS with different values of $H$ over instances generated from different benchmarks.}
	\label{fig:bitcoinlinear0}
\end{figure}

\subsection{Query-based algorithms}
In our experiments, we set the number of queries to $n=25$. 
We test \RLIS and \RBIS with $H$ taken from $\{ 3, 5, 8, 10, 13 \}$,
% (these values are $\lceil i\cdot n/10 \rceil$ for $i\in[1..5]$). 
and for all values of $\eta \in[0,H]$.  
Let $Alg$ denote any of our algorithms (\RLIS or \RBIS for a certain value of $H$). 
For each instance $I_x$ from our benchmarks and each selected value of $\eta_0$, the following process is repeated 1000 times for $Alg$. First, the (correct) responses to the 25 queries asked by $Alg$ are generated; then out of these 25 responses, $\eta_0$ of them are selected uniformly at random, and flipped. This is the prediction $P$ that is given to $Alg$; we run $Alg$ with this prediction, and record its profit. After running 1000 tests, the average value of the reported profits is recorded as the average profit of $Alg$ for $I_x$, for a value of error equal to $\eta_0$. 

Figures~\ref{fig:bitcoinlinear0} and~\ref{fig:bitcoinbinary0} depict the average profit (as a function of $\eta$) for \RLIS and \RBIS, respectively. %, for inputs generated from the \texttt{Bitcoin-to-USD} benchmark. 
Since this is a non-oblivious setting, the profit is only reported for values of $\eta \leq H$. The results are consistent over all benchmarks. We observe that both algorithms attain profit significantly better than \ONstar for reasonable values of error, and their profit degrades gently with the error. In particular, \RBIS with $H \in \{3, 5\}$ accrues an optimal profit. For a fixed value of $\eta$, smaller values of $H$ yield to better profit for both algorithms. This is consistent with Theorems~\ref{th:main:linear} and~\ref{th:main:binary}, which bound the competitive ratios as an increasing function of $H$. We also observe that \RBIS performs better than \RLIS, which is again consistent with Theorems~\ref{th:main:linear} and~\ref{th:main:binary}. We also observe that even if $H$ is relatively large (e.g., $H=8$), 
\RBIS results in better profit in comparison to \ONstar.

% See Appendix for additional results on all settings. 

%\begin{figure}
%	\includegraphics*[width = .97\columnwidth]{charts/bitcoin/binary0} \caption{Average profit of \RBIS as a function of error (we have $\eta \leq H$).}
%	\label{fig:bitcoinbinary0}
%\end{figure}

\begin{figure}[!t]
	\centering
	\begin{subfigure}[b]{.49\columnwidth}
	\includegraphics[width = \columnwidth]{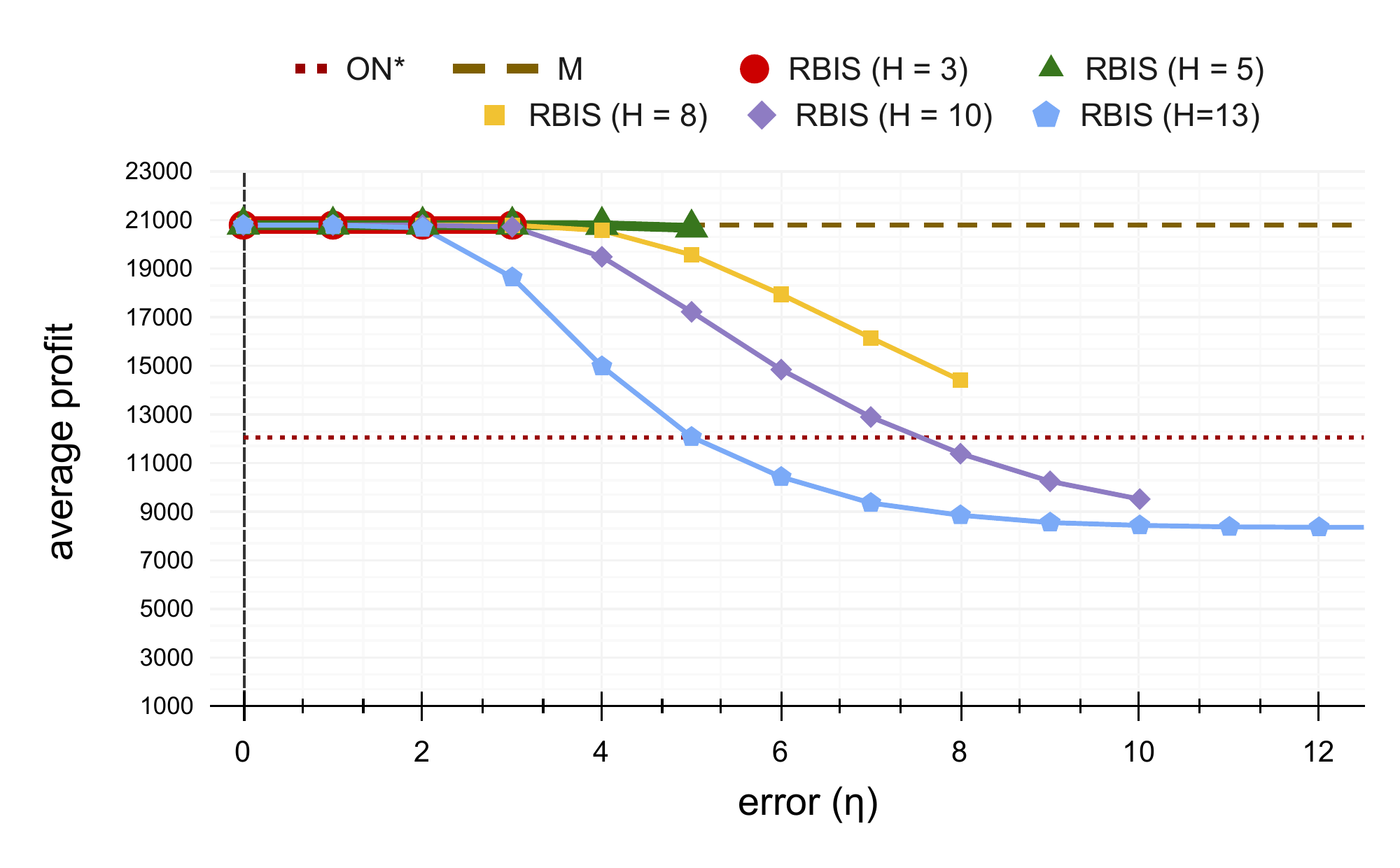}
	\caption{average profit for the \bitcoin benchmark}
	\end{subfigure}%\hfill%
	\begin{subfigure}[b]{.49\columnwidth}
		\includegraphics[width = \columnwidth]{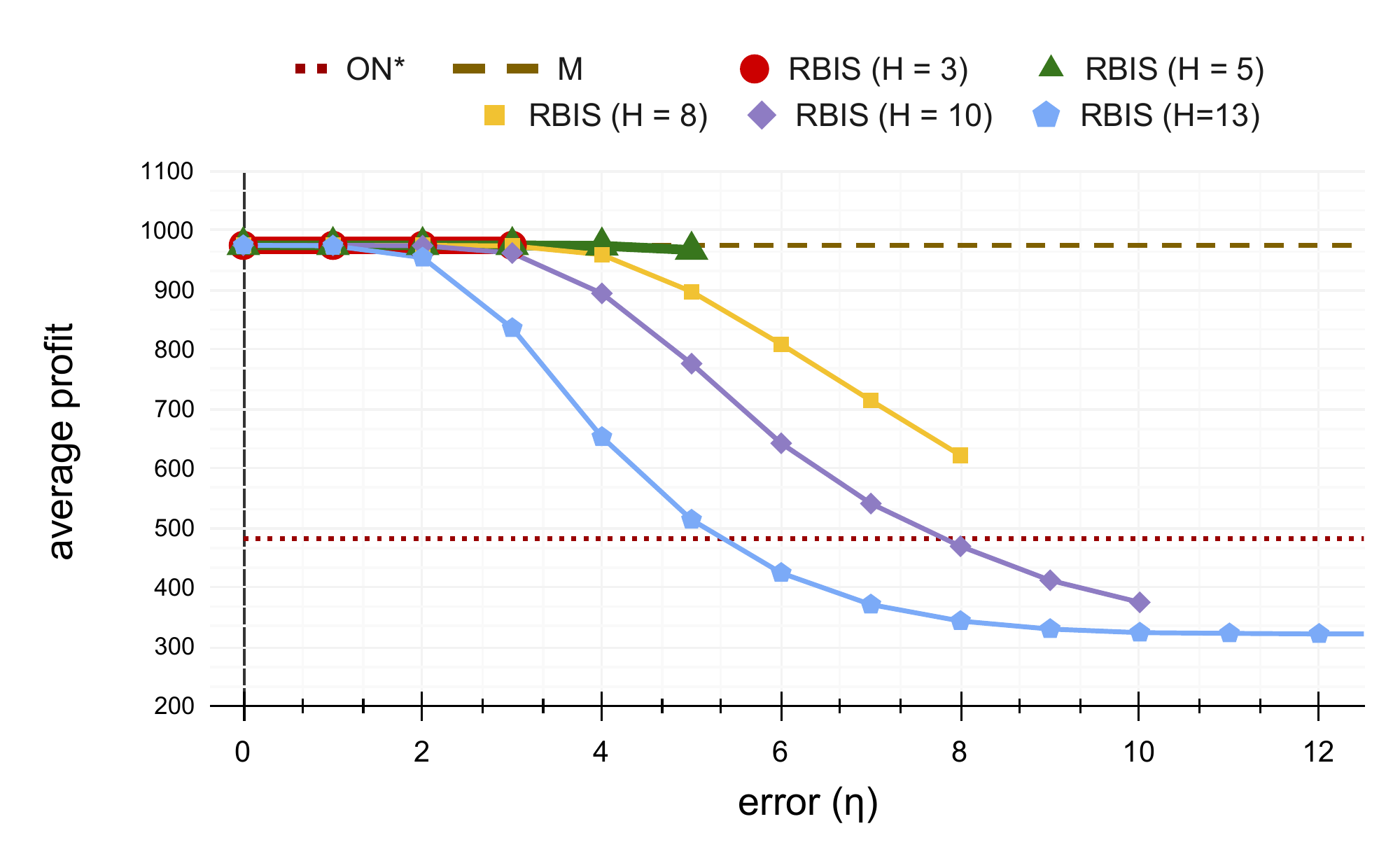}
		\caption{average profit for the \ether benchmark}
	\end{subfigure}
	\hfill
	\begin{subfigure}[b]{.49\columnwidth}
		\centering
		\includegraphics[width = \columnwidth]{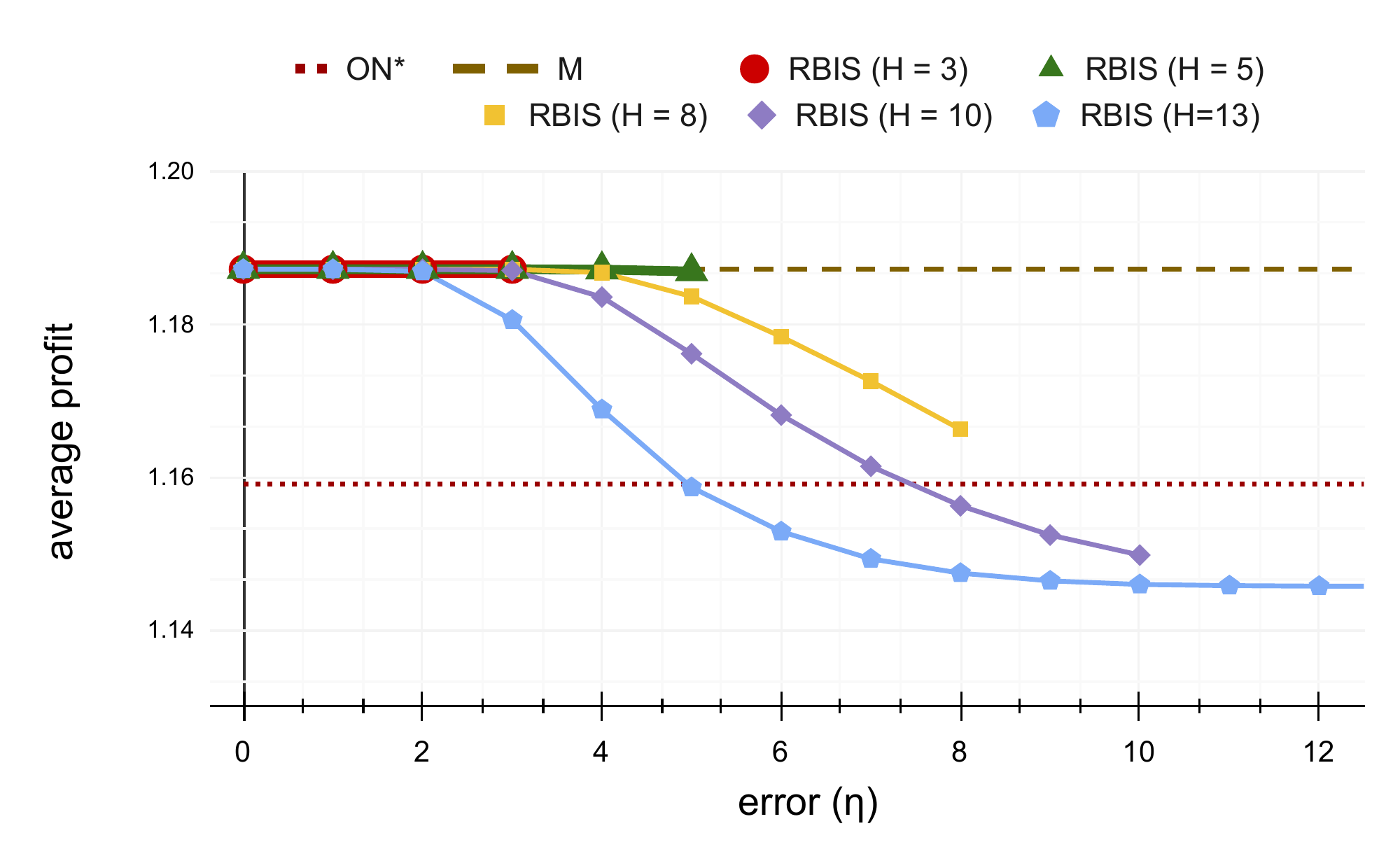}
		\caption{average profit for the \euro benchmark}
	\end{subfigure}
	\hfill
	\begin{subfigure}[b]{.49\columnwidth}
		\centering
		\includegraphics[width = \columnwidth]{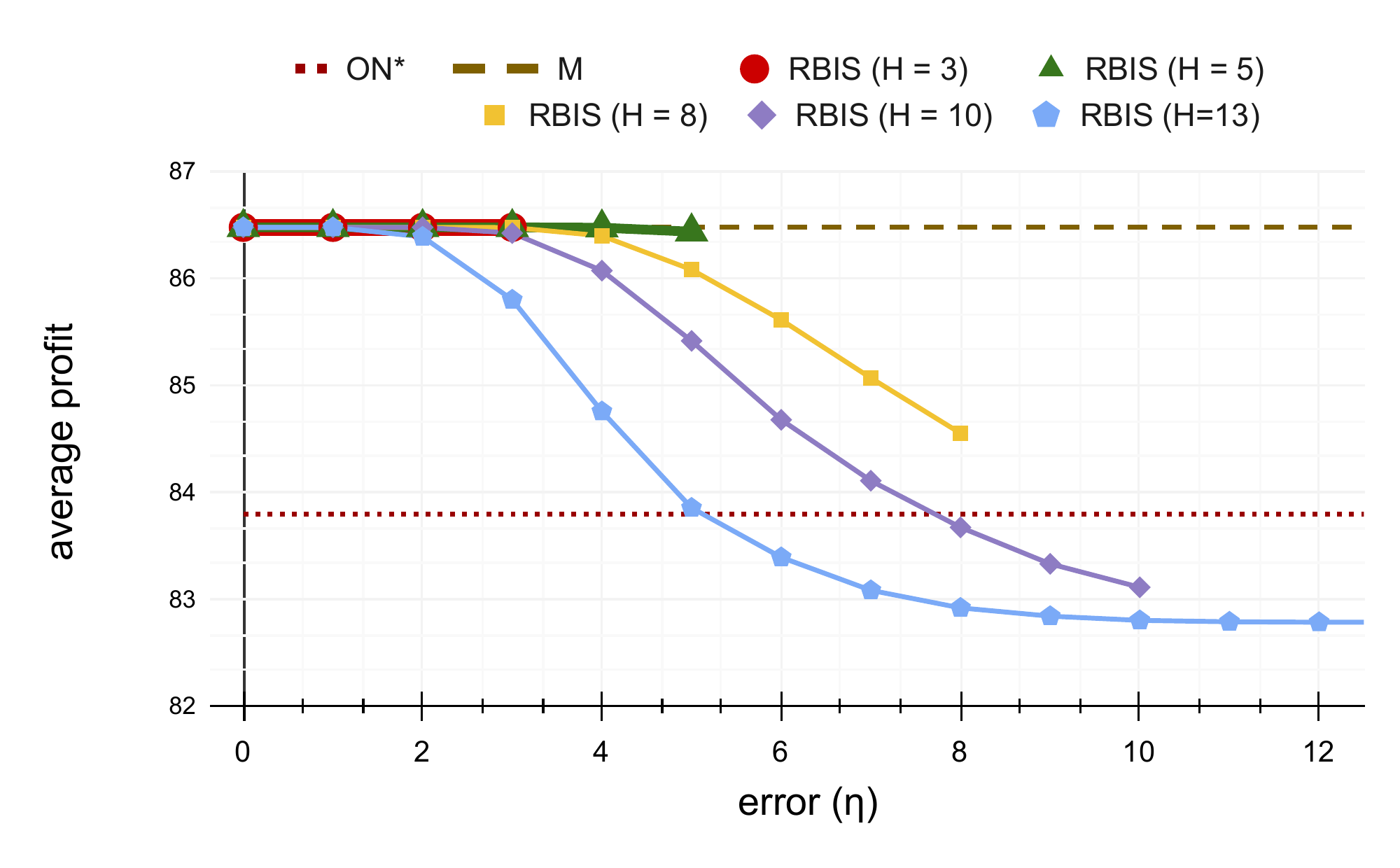}
		\caption{average profit for the \jpy benchmark}
	\end{subfigure}
	\caption{The average profit of \RBIS with different values of $H$ over instances generated from different benchmarks.}
	\label{fig:bitcoinbinary0}
\end{figure}

\section{Conclusion}
\label{sec:conclusion}

We gave the first theoretical study, with supporting experimental evaluation over real data, of a fundamental problem in online decision making, and in a learning-augmented setting. Despite the simplicity of the problem in its standard version, the learning-augmented setting is quite complex and poses several challenges. Future work should expand the ideas in this work to generalizations of online search such as 
one-way trading and online portfolio selection. 

Our robust binary search algorithm can be useful in other query-based optimization settings, with or without predictions, since it addresses a broad setting: select a ``good'' candidate, using noisy queries, while maximizing the size of the candidate space (exponential in the number of queries). %In the Appendix, we show an application, namely how \RBIS can help improve the performance of an algorithm for  {\em contract scheduling} given in~\cite{DBLP:conf/aaai/0001K21}. 

\bibliographystyle{plain}
% Use \bibliography{yourbibfile} instead or the References section will not appear in your paper
\bibliography{aaai22,refs}

\end{document}